\def\bu{{\mathbf{u}}}
\def\i{\mathrm{i}}
\def\e{\mathrm{e}}
\newcommand{\APF}{\operatorname{APF}}
\newcommand{\interspace}{\vspace*{0.5cm}} 
\newtheorem{proposition}{Proposition}[section]
\newtheorem{lemma}{Lemma}[section]
\newtheorem{corollary}{Corollary}[section]
\newcommand{\eqlabelApp}[1]{\stepcounter{equation}\tag{A-{\theequation}} \label{#1}}
\let\hat\widehat
\let\tilde\widetilde
\newcommand\blfootnote[1]{
    \begingroup
    \renewcommand\thefootnote{}\footnote{#1}
    \addtocounter{footnote}{-1}
    \endgroup
}
\title{Filtering through a topological lens: homology for point processes on the time-frequency plane}
\author{J.M. Miramont\footnote{Corresponding authors: J. M. Miramont, K. A. Tan, S. S. Mukherjee \\ \hspace*{0.25cm} E-mails: jmiramontt[at]univ-lille[dot]fr; kinaun[dot]tan[at]u[dot]nus[dot]edu;   ssmukherjee[at]isical[dot]ac[dot]in} \footnote{Universit\'e de Lille, CNRS, Centrale Lille, UMR9189–CRIStAL, Av. Paul Langevin,
59650 Villeneuve-d'Ascq, France.}, K.A. Tan$^{*}$\footnote{Department of Mathematics, National University of Singapore, 10 Lower Kent Ridge Road,
Singapore 119076.}, S.S. Mukherjee$^{*}$\footnote{Statistics and Mathematics Unit,
Indian Statistical Institute,
203 Barrackpore Trunk Road,
Kolkata, India 700108.}, R. Bardenet$^{\dag}$, S. Ghosh$^{\ddag}$
\blfootnote{\underline{Keywords}: Topological data analysis $|$ Time-frequency analysis $|$ Persistent homology $|$ Spectrogram zeros $|$ Stochastic geometry $|$ Hyperuniformity $|$ Point processes} 
}
\date{ }
\begin{document}

\maketitle

\begin{abstract}
We introduce a very general approach to the analysis of signals from their noisy measurements from the perspective of Topological Data Analysis (TDA). While TDA has emerged as a powerful  analytical tool for data with pronounced topological structures, here we demonstrate its applicability for general problems of signal processing, without any a-priori geometric feature. Our methods are well-suited to a wide array of time-dependent signals in different scientific domains, with acoustic signals being a particularly important application. We invoke time-frequency representations of such signals, focusing on their zeros which are  gaining salience as a signal processing tool in view of their stability properties. Leveraging state-of-the-art topological concepts, such as stable and minimal volumes,  we develop a complete suite of TDA-based methods to explore the delicate stochastic geometry of these zeros, capturing signals based on the disruption they cause to this rigid, hyperuniform spatial structure. Unlike classical spatial data tools, TDA is able to capture the full spectrum of the stochastic geometry of the zeros, thereby leading to powerful inferential outcomes that are underpinned by a principled statistical foundation. This is reflected in the power and versatility of our applications, which include competitive performance in processing a wide variety of audio signals (esp. in low SNR regimes), effective detection \textit{and reconstruction} of gravitational wave signals (a reputed signal processing challenge with non-Gaussian noise), and medical time series data from EEGs, indicating a wide horizon for the approach and methods introduced in this paper. 
\end{abstract}

\section{Introduction}

The interplay of randomness and structure is a defining characteristic of modern day data science and machine learning. The size and dimensionality of data we can access and store is vast and ever-growing, potentially overwhelming existing computational capacities. This has motivated the quest for simple, low-dimensional structure in the randomness of massive datasets, which is an effective ansatz to capture how the real world manifests itself in big data.
\\

\subsection{Topological Data Analysis.} An effective tool that has emerged in recent years to identify and analyze such structure is \textit{Topological Data Analysis} (\textit{abbrv.} TDA). Leveraging ideas and techniques from algebraic topology, this suite of methods aims to understand latent geometric structures that are often present in large-scale datasets. By developing a principled multi-scale approach to understand the topological structure of data point clouds, TDA isolates geometric features (such as holes) that \textit{persist} across a large number of scales (a technique referred to as \textit{persistent homology}). These persistent features may be identified to be inherent or intrinsic to the dataset that encapsulates its latent geometric structure, which in turn reveals the fundamental real-world information hidden in the ambient noise and many confounding attributes. 

Originating in the late 20th century, TDA has quickly emerged to demonstrate practical utility in a wide range of applications, starting from materials science \cite{NMTFYOHMK24,CiHiVi23} and genomics \cite{NiLeCa11} to fingerprinting proteins \cite{xia2014persistent} and audio signals \cite{RFDHB24}, biomedicine \cite{iqbal2021classification}, neuroscience \cite{colombo2022tool} and spatial networks \cite{biscio2019accumulated, byrne2019topological}, to provide a very limited and partial list of successful use cases. However, while TDA provides an attractive combination of powerful and sophisticated mathematical tools brought to the service of real-world data, its many successful applications have mostly been focused on situations where the data is fundamentally \textit{geometric} in its nature, or at least has an underlying notion of \textit{shape} in a reasonably straightforward sense \cite{kerber2016persistent, carlsson2020topological}. However, most real-world data generating processes do not embody a clear geometric (or shape) structure in any direct manner. This naturally leads to the tantalizing question of bringing the formidable arsenal of TDA techniques to bear on wider classes of data analytical problems without an a priori geometric structure. 

In this work, we focus on a piece of this puzzle, proposing a principled approach towards the large-scale application of TDA methods to a very general class of signal processing problems. In particular, our approach is readily applicable to a broad bandwidth of time-indexed signals, including as a special case all kinds of noisy signals arising in acoustics. We show the range of applicability extends much farther, encompassing gravitational wave signals (that are known for their notorious difficulty due to extremely low SNR and non-Gaussianity of noise), and general classes of time-series data (such as medical data from EEG measurements). Our methods are underpinned with a principled statistical foundation, including in particular specialized ingredients from multiple hypothesis testing to inform our signal reconstruction procedure.
Further connections to related work \cite{RFDHB24,bobrowski2023universal}, are addressed later in the Discussion (Sec. \ref{s:discussion}).

\subsection{Time-frequency signal processing.} A more in-depth discussion calls forth a brief digression into time-frequency signal processing, a suite of techniques that distinguishes itself by studying square-integrable functions of time (known as \emph{finite-energy signals}), tracking frequency as it varies over time in the signal. 

One can think of someone whistling a melody they are reading on a musical score: time-frequency analysis aims at reconstructing the musical score from an audio recording of the whistler. 
This reconstructed musical score, known as a time-frequency representation of the signal, can then serve in many downstream tasks, such as detecting whether someone was indeed whistling, and if yes, identifying the song from which the melody is taken.
Time-frequency analysis is not limited to audio processing: a modern success story of the field is the detection of gravitational waves, which have a well-understood time-frequency signature in the interferometric signals collected by the LIGO and VIRGO detectors \cite{ligo2020guide}; other applications abound in a multitude of fields such as medical image processing \cite{zeng23}, time series \cite{xie2024}, analysis of financial data \cite{du2020imageprocessingtoolsfinancial}, biomedical signals \cite{wu2020current, xie2024} and biomechanics of postures \cite{xie2024}.

One of the dominant time-frequency representations is the short-time Fourier transform (STFT).
It is a complex-valued function of two variables, time and frequency, which collects Fourier transforms of the signal multiplied by a sliding window.  
Following the seminal work \cite{Fla}, a string of recent works \cite{meignen2016adaptive,BaFlCh18,BaHa19, koliander2019filtering, escudero2024efficient, ghosh2022signal, haimi2022zeros, miramont2023unsupervised, moukadem2024analytic} has focused on extracting information solely from the zeros of this transform, informally time-frequency pairs that are points of \textit{pure silence}; see also \cite{PaBa24Sub} for an introductory survey.
This approach is motivated in part by results that indicate that these zeros almost characterize the whole transform \cite{gardner2006sparse, Fla}, and in part by experiments hinting that detection using zeros comes with better statistical power than maxima-based methods at low signal-to-noise ratio (SNR) \cite{PaBa22,miramont2024benchmarking}. In short, the spectrogram zeros form a highly parsimonious, albeit non-linear, summary of the landscape of the STFT that is nonetheless informative enough to perform meaningful signal processing.

\subsection{Zeros, hyperuniformity and statistical mechanics.} Spectrogram zeros bring into play connections to the theory of point processes and strongly correlated particle systems from statistical mechanics, which we expound on below. To wit, when the signal is modeled by a random process,
the zero set of its STFT becomes a point process, i.e. a random configuration of points \cite{BaFlCh18, BaHa19, ghosh2022signal}. When the signal is pure white Gaussian noise, this point process turns out to correspond to well-known models of strongly correlated particle systems in statistical mechanics, namely the zeros of the so-called  planar (real) Gaussian analytic function (and its relatives) \cite{hough2009zeros}. In particular, such systems are known to exhibit a very high degree of homogeneity in their spatial distribution, a property known as \textit{superhomogeneity} or \textit{hyperuniformity} \cite{torquato2016hyperuniformity, abreu2017weyl, haimi2022zeros, salvalaglioAPS, torquato2023hyperuniformity, Milor_2025}. An illustrative sample can be found in Fig.~\ref{f:noise_spec_example_a}.

When a deterministic signal is added, the modulus of the STFT becomes large in a time-frequency region informally called the \emph{support} of the signal, effectively creating signal-shaped holes in the pattern of zeros; see e.g. Fig. \ref{f:noise_spec_example_c}.
Holes in a hyperuniform, grid-like pattern \cite{abreu2017weyl, torquato2023hyperuniformity} like the zeros of the planar Gaussian analytic function are intuitively easier to detect than similar holes in a completely random point pattern such as a Poisson point process.
This seminal observation of \cite{Fla} has motivated the use of techniques from spatial statistics to detect and extract signals by detecting and locating unusual holes in the pattern of zeros. Yet, such methods have so far relied on relatively rudimentary test statistics (such as pairwise distances \cite{BaFlCh18,PaBa22}) that often fall far short of capturing the intricate stochastic geometry of the zeros, often resulting in a collection of arbitrary algorithmic choices, leading to an ill-posed question of defining a significant hole from a collection of generic spatial statistics. This, therefore, calls for a principled, automated approach to detecting and reconstructing holes of arbitrary shape in the configuration of zeros of a spectrogram, which is a problem that we will focus on.
\\

\begin{figure}[h!]
    \centering
    \begin{subfigure}[b]{0.4\textwidth}
    \centering
        \includegraphics{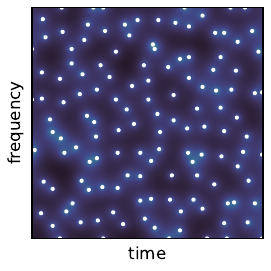}
        \caption{}
        \label{f:noise_spec_example_a}
    \end{subfigure}
    \begin{subfigure}[b]{0.4\textwidth}
        \centering
        \includegraphics{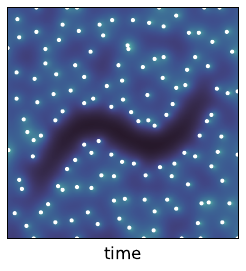} 
        \caption{}
        \label{f:noise_spec_example_c}
    \end{subfigure}
    \hfill    
    
    \caption{\textbf{(a)} Spectrogram of a realization of complex white Gaussian noise (white dots indicate the spectrogram zeros). \textbf{(b)} Same as \textbf{(a)} but with an additive mixture of signal and noise.}
    \label{f:noise_spec_example}
\end{figure}

\subsection{TDA meets signal processing.}
The inherent capability of TDA to access the finer topological properties of a point cloud, and the intrinsic requirement for this capability in signal analysis based on spectrogram zeros, make the intersection of these two domains a rich and natural overlap for symbiotic development -- a direction that we pursue in this work. To sum up, the crucial problem in zero-based signal processing is to algorithmically detect and reconstruct \textit{statistically significant} holes in an efficient manner; on the other hand, the algebraic topology of the two-dimensional spectrogram zero point cloud is fundamentally determined by its holes (in technical terms, its \textit{homology}), which smoothly  dovetails the algorithmic developments in persistent homology. Reciprocally, this overlap also raises fundamental and interesting statistical problems in TDA (such as the determination of the number and identities of statistically significant holes from persistence diagrams), some of which we take up in this paper. 

Furthermore, as discussed at the beginning, this brings the powerful tools and techniques of TDA to bear on a very general class of time-indexed signal processing problems where a pronounced geometric shape structure is not present in the \emph{a priori} description of the problem. We hope that this work will inaugurate a systematic development of the mutual interconnections between TDA and signal processing, motivating the growth of new tools and techniques in both fields.

\section{Spectrograms and their zeros}
\label{s:spectrograms}

The short-time Fourier transform (STFT) is a fundamental concept in time-frequency analysis \cite{flandrin1998time, grochenig2001foundations}. For a signal $f\in L^{2}(\mathbb{R})$ and a window $g\in L^{2}(\mathbb{R})$, the STFT of $f$ with window $g$ is defined as
\begin{equation}
\label{e:stft}
    V_{g}(f)(u,v) = \int_{-\infty}^{+\infty} f(t) \overline{g(t-u)}\; \e^{- 2\i\pi tv} dt,
\end{equation}
where the variables $u, v\in\mathbb{R}$ are usually referred as \emph{time} and \emph{frequency}.
Taking $\Vert g\Vert =1$, $V_g$ is an isometry, and the \emph{spectrogram} $(u,v) \mapsto |V_{g}(f)(u,v)|^{2}$ is considered a measure of how the \emph{energy} $\Vert f\Vert^2 = \Vert V_g(f)\Vert^2$ is spread across the time-frequency plane $\mathbb{R}^2$ \cite{flandrin2018explorations}.

In the fundamentally important case when the window is taken to be the Gaussian $g:t\mapsto 2^{1/4} \exp(-\pi t^2)$, $V_g$ maps tempered distributions $\mathcal{S}'(\mathbb{R})$ to analytic functions of $z=u+\i v$ \cite{AsBr09,BaFlCh18}. 
In particular, since analytic functions have isolated zeros, one can talk about the random configuration of isolated points (in probabilistic terms, the \emph{point process}) formed by the zeros of \(z =u+\i v \mapsto V_g(\xi)(u, v), \) where $\xi$ is a Gaussian white noise.

More specifically, one can define real Gaussian white noise as follows.
The Bochner-Minlos theorem \cite{holden1996stochastic} ensures the existence of a unique probability measure $\mu_1$ on $(\mathcal{S}^\prime(\mathbb{R}),\mathcal{B}(\mathcal{S}^\prime(\mathbb{R}))$ such that
\[ \mathbb{E}_{\mu_1}[\e^{\i\langle \cdot,\phi \rangle}] \coloneqq \int_{\mathcal{S}^\prime(\mathbb{R})} \e^{\i\langle \xi,\phi \rangle} d\mu_1(\xi) = \e^{-\frac{1}{2}\Vert\phi\Vert^2}, \phi\in\mathcal{S}(\mathbb{R}), \]
where $\Vert\phi\Vert^2 = \Vert\phi\Vert^2_{L^2(\mathbb{R})}$, $\langle \xi,\phi \rangle = \xi(\phi)$ is the action of $\xi\in\mathcal{S}^{\prime}(\mathbb{R})$ on $\phi\in\mathcal{S}(\mathbb{R})$.
$\mu_1$ is called real Gaussian white noise; in particular, for $\xi \sim \mu_1$ and orthonormal functions $\phi_1,\dots,\phi_n \in \mathcal{S}(\mathbb{R})$, the vector $(\langle \xi,\phi_1 \rangle,\dots,\langle \xi,\phi_n \rangle)$ is a real standard Gaussian random vector.
Now, for two independent real white Gaussian noises $\xi_1$ and $\xi_2$, the (complex) Gaussian white noise is defined as the law of $(\xi_1 + \mathrm{i} \xi_2)/\sqrt{2}$, which we define to apply to a pair of Schwarz functions $(\phi_1,\phi_2)$ by
$$
    \langle\xi_1,\phi_1\rangle + \mathrm{i}\langle{\xi_2, \phi_2}\rangle.
$$
Because the modulus of the STFT is proportional to the modulus of the so-called Bargmann transform, which maps the basis $(h_k)$ of Hermite functions to monomials \cite{grochenig2001foundations,gardner2006sparse,Fla}, one can actually prove that $V_g(\xi)$ has the same zeros as
the random series 
\[
    \sum\limits^{\infty}_{k=0}\left\langle\xi,h_k \right\rangle \frac{\pi^{k/2}z^{k}}{\sqrt{k!}},
\] 
which is entire $\mu_1$-almost surely; see \cite{ BaFlCh18,BaHa19}.
Since $(\left\langle\xi,h_k \right)$ are independent standard Gaussians, we recognize the planar Gaussian analytic function \cite{hough2009zeros}, the zeros of which are hyperuniform.

Figure~\ref{f:noise_spec_example_a} and Fig.~\ref{f:noise_spec_example_c} show the log spectrogram of resp. a realization of Gaussian white noise and a deterministic signal plus Gaussian white noise, with white dots indicating the zeros of the spectrogram.
As first noted by \cite{Fla}, the zeros are absent in the signal's time-frequency support, so that identifying the regions with low density of zeros is tantamount to identifying areas of the time-frequency plane where the signal is strong.


\section{Tools from topological data analysis} 
\label{s:tda}

Persistent homology is arguably the main workhorse of topological data analysis, providing a principled understanding of the multiscale geometry of a data point cloud, and significantly, enabling algebraic manipulations that are amenable to computer implementation. In the limited scope of the main text, we provide a succinct and conceptual overview focusing mostly on the 2D setting relevant for us, directing the interested reader to Section \ref{APP:s:tda} of the Appendix for a more detailed and in-depth discussion. For even more extensive coverage, we refer to any of the excellent texts \cite{chacholski2018building, ChBe21, EdHa22, dey2022computational}, to provide a partial list of references. 

\subsection{Persistence and Persistence Diagrams} 
\label{s:pers_homology-1}

\begin{figure}[h!]
    \begin{subfigure}{0.32\textwidth}
        \hspace{-1cm}
        \includegraphics{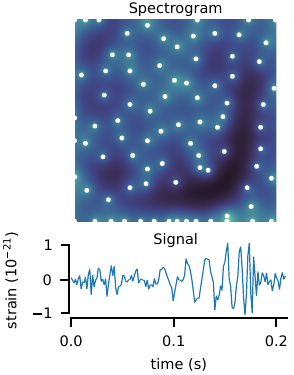}
        \caption{} \label{fig:pipeline_a}
    \end{subfigure}
    \begin{subfigure}{0.4\textwidth}
    \hspace{-1cm}
        \includegraphics{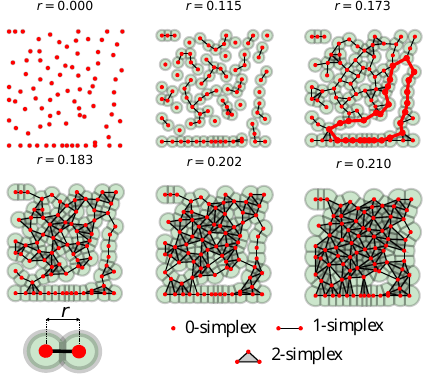}
        \caption{} \label{fig:pipeline_b}
    \end{subfigure}
    \begin{subfigure}{0.26\textwidth}
        \includegraphics{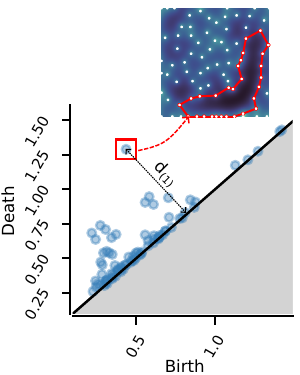}
        \caption{} \label{fig:pipeline_c}
    \end{subfigure}
    \caption{\textbf{(a)} The spectrogram of the gravitational wave event GW150914 (top), recorded by the Laser Interferometer Gravitational-wave Observatory (LIGO) in Hanford, WA, USA. and the corresponding signal (bottom). \textbf{(b)} Filtration of $\alpha$-complexes showing the birth of the most persistent cycle in red line. \textbf{(c)} Persistence diagram. The most persistent component is marked in red and the corresponding cycle is superimposed to the spectrogram for reference. $d_{(1)}$ is the longest distance to the identity diagonal.}
    \label{fig:pipeline}
\end{figure}

A very canonical approach to TDA proceeds via developing a \textit{geometric graph} (more precisely, a \textit{simplicial complex}) based on the given data point cloud, by connecting data points that are closer to each other than a threshold $r\geq 0$. For contextualization, we refer the reader to the illustrations in Figs. \ref{fig:pipeline_a} and \ref{fig:pipeline_b} (next page).
The threshold $r$ will be varied continuously, giving rise to a multiscale sequence of nested geometric graphs (a \emph{filtration}), which at any particular threshold value gives a skeletal representation of the dataset at that scale. 

Starting with the isolated data points when the threshold $r$ is equal to $0$, the graph grows and connectivity increases with increasing $r$, eventually culminating in the fully connected graph (c.f. Fig. \ref{fig:pipeline_b}). 
At some point, cycles start forming due to edges being connected; this stage may be thought of as the \textit{birth} of a cycle (to be thought of as a \textit{hole}). At this point, the $r/2$ disks centered at the endpoints of each edge of the cycle overlap pairwise, but all the disks pertaining to the cycle do not overlap jointly. At a suitably higher threshold $r$, all these disks do indeed overlap, in which case we declare the \textit{hole} to be \textit{filled} and the cycle to have \text{died}. It is reasonable to surmise that there are two kinds of cycles aka holes -- those that get \textit{filled} i.e. die soon after they are created, and those that \textit{persist} for a long time between their birth and death. Persistent homology theory identifies the latter variety of cycles or holes to be fundamental characteristics of the topological structure of the dataset, relegating the former type to be an artifact of random noise. The so-called \textit{Persistence Diagram} (\textit{abbrv.} PD) plots the times of birth and death of the different cycles against each other in a planar diagram; as such, the points in the PD that are far from the diagonal are understood to carry fundamental structural information on the dataset.
Figure \ref{fig:pipeline_c} displays the PD for the gravitational wave signal from Fig. \ref{fig:pipeline_a}, where is possible to appreciate that the most persistent hole, i.e. the point further from the identity in the PD, corresponds with the hole created by the signal in the pattern of zeros. 

\subsection{Persistent Homology and Learning Holes} 
\label{s:pers_homology-2}

The homological ingredient in Persistent Homology (\textit{abbrv.} PH) brings into play a crucial linear algebraic toolbox by considering the abstract vector spaces spanned by the cycles, edges and vertices in this (sequence of) geometric graphs, and by looking at the \textit{boundary maps} that connect each of these objects to their extremities. Under the standard \textit{homotopy equivalence} of cycles in algebraic topology, one can obtain natural (finitely generated) Abelian group structures on these vector spaces with the boundary maps inducing homomorphisms connecting them, with the groups corresponding to different thresholds being further connected via canonical maps arising out of natural embeddings of the simplicial complexes.
An in-depth description of this can be found in Sec. \ref{APP:s:pers_homology} of the Appendix.
 
The upshot of this is that efficient linear algebraic methods can be invoked to study these homology groups, wherein a non-trivial homology group (of \textit{first order}) corresponds to the presence of holes that equal in number to the rank of this group (counted modulo homotopy equivalence). In particular, there are effective tools to compute the \textit{generators} of these homology groups, which themselves correspond to (homotopy equivalence classes of) holes. This motivates our broad approach to the identification of non-trivial holes in a point cloud: identify points on the PD that are far from the diagonal (in a statistically significant sense).
Such an \textit{outlier} in the PD would pertain to a cycle in the relevant homology group, which would lead us to a hole in the underlying point cloud.

\subsection{Optimal Cycle Representations} 
\label{s:pers_homology-3}
A key issue in the above approach to learning holes is the inherent ambiguity in the homological cycles due to them being merely homotopy equivalence classes of cycles in the original point cloud (as opposed to being the cycles themselves). This leads to the question of identifying suitable representative cycles in these homological equivalence classes, culminating in the notion of \textit{Minimum} and \textit{Stable}  Volumes \cite{Oba18, Oba23}, that we will use in our algorithmic considerations.

\subsubsection{Minimum Volumes} It is intuitively appealing to search for a representative cycle for the equivalence class that encloses the smallest possible area of $\mathbb{R}^2$ in the spectrogram, that portion would likely correspond to the signal support. One way to formalize this minimizing class representer is the notion of \textit{minimum-volume} cycle, introduced in \cite{Sch15} and generalized in \cite{Oba18}. While the original definition takes the form of the solution to a constrained optimization problem, the solution of which has to be numerically approximated, our particular setup allows us to easily obtain minimum-volume cycles through the so-called \emph{persistence tree} (PT) of the point cloud, a notion introduced by \cite{Sch15}. 
Notice that in our application, we are more interested in the area enclosed by the minimum-volume cycles, henceforth termed minimum volume (\textit{abbrv.} MV).
For an algorithm to generate the PT, we refer the reader to the Algorithm \ref{a:persistence_tree} in the Appendix and \cite{Oba18}. A detailed discussion on MVs is taken up in Sec. \ref{APP:s:minvol_cycles} of the Appendix.

\subsubsection{Stable Volumes} It is a well-known issue in computational topology that a small change in the input signal, such as random noise, can produce large variations of the cycles --and their volumes--, even for the cycles associated to long-living components \cite{cohen2005stability,Ben20,Oba23}.  
Approaches to \emph{stabilize} the components of interest have been proposed \cite{Ben20,Oba23}. Intuitively, one way to robustify the construction of minimum volumes is to take the intersection of minimum volumes corresponding to slightly altered inputs, where the alteration is small enough that we can identify a volume from one signal to the other. Formally, we focus here on the so-called \emph{stable volumes} (\emph{abbrv.} SV) recently proposed by Obayashi \cite{Oba23}, which can be defined directly from the persistence tree described in Sec. \ref{APP:s:minvol_cycles} of the Appendix. Moreover, a detailed discussion on stable volumes and its computation is undertaken in Sec. \ref{APP:s:stable_vols} of the Appendix as well.

\section{Ingredients from Statistical Inference}
We give in this short section a few elements from statistical inference that we later use in signal detection and reconstruction. 
We heavily lean on multiple hypothesis testing to make informed decisions in our signal processing algorithms.
In simple hypothesis testing, one has one null hypothesis $H_{0}$ and an associated $p$-value for a given observation.
We say we reject $H_0$ when the $p$-value is lower than a prescribed significance level $\alpha$.
Intuitively, $\alpha$ is the probability of incorrectly rejecting $H_0$ (i.e. type I error) that, as a researcher, one is willing to accept.

Multiple hypothesis testing is a more general scenario where several tests are performed for a number of null hypotheses.
Say we perform $K$ tests, this is akin to having $K$ null hypotheses $H_{1,0},\dots,H_{K,0}$ and determining if any of the $K$ $p$-values is below some per-test significance level $\alpha_k$.

A relevant concept in this case is the \emph{family-wise} error rate (FWER), which is the worst-case probability of rejecting a true hypothesis among $H_{1,0}, \dots, H_{K,0}$, i.e. the probability of making \emph{any} type I error while conducting all $K$ tests \cite{wasserman2013all}.
Controlling the FWER means ensuring that the chance of falsely rejecting \emph{any} null hypothesis is less than~$\alpha$ \cite{dudoit2008multiple, fromont2016family}.
If $\alpha_k \equiv \alpha$, then from basic probability we know that the FWER is higher than $\alpha$ itself. Thus, some sort of \emph{correction} is needed.
A fairly used approach is the Bonferroni correction, where $\alpha_k \equiv \frac{\alpha}{K}$, which we shall use in the following sections.

For a more elaborate discussion of the concepts involved in multiple testing, we refer the reader to Sec. \ref{APP:s:hyp_test} of the Appendix.

\section{Signal detection via Persistence Diagrams} \label{s:test}

We formulate several detection tests based on different choices of test statistics. 
We first describe the simple and multiple testing procedures for a generic test statistic, to be instantiated later with test statistics based on TDA.

\subsection{Simple and multiple hypothesis testing}
Signal detection can be stated as a test aiming at rejecting $H_{0}: h = \xi$, the hypothesis that the measured signal is pure noise \cite{flandrin1998time,BaFlCh18,ghosh2022signal}.

\subsubsection{Simple hypothesis test}
Let $X$ be a generic test statistic that takes higher values for signals than for pure noise. 
We take $B$ independent Monte Carlo simulations of white noise and compute the test statistics $X_1,\dots,X_B$. 
A $p$-value for testing $H_0$ is then computed as $p = \frac{1}{B}\sum_{j=1}^B\mathbb{I}(X_{(j)}>X_{\text{obs}}),$ where $X_{(j)}$ are the top order statistics of $X$ and $X_\text{obs}$ is the test statistic obtained from the sample to be tested.

\subsubsection{Multiple hypothesis testing - simultaneous tests}
\label{s:simultaneous tests}
We devise an alternative test for detection by considering the $K$ most persistent components in the PD rather than just the most significant one as before.
Start with a suitable choice of $K$.
Again, from $B$-many simulated persistence diagrams, with statistics $X_{(k), j}, 1 \le j \le B$, we compute a series of $p$-values $p_1, p_2, \dots, p_{K}$ as 
\begin{equation} \label{e:p_vals}
    p_k := \frac{1}{B}\sum_{j=1}^B \mathbb{I}(X_{(k),j} > X_{(k), \mathrm{obs}}).
\end{equation}
We conduct these tests at varying levels of significance $\alpha_k$. 
A signal is detected if and only if $p_k<\alpha_k$ for some $1\le k\le K$. We choose $\alpha_k=\frac{\alpha}{K}$ according to the Bonferroni correction.
More details of this testing strategy can be seen in Sec. \ref{APP:s:hyp_test:mult:simul} of the Appendix.

The following subsections define the test statistics we derive from TDA: distance to the diagonal in the PD and energy from the most persistent components.

\subsection{Distance-based test statistic} \label{s:test_tda}
Formally, for a cycle $c$, let $d(c) = (y(c) - x(c))/\sqrt{2}$ denote the distance of the birth-death pair $(x(c), y(c))$ to the diagonal line in the PD.
We compute the order statistics $d_{(1)} \ge d_{(2)} \ge \cdots$ of these distances, where $d_{(1)}$ is the longest distance to the diagonal in the PD (as shown in Fig. \ref{fig:pipeline_c}), $d_{(2)}$ is the second longest and so on.
Possible test statistics in this case are $d_{(1)}$ (for simple hypothesis testing) or the $K$ most significant distances (for multiple hypothesis testing).

\subsection{Energy-based test statistic}
\label{s:energy test}
Aside from $d_{(k)}$, we have derived an alternative test statistic based on the energy $\mathcal{E}_{(k)}$ contained in the $K$ most persistent components according to the distances to the diagonal line in the PD.

In practice, spectrograms are approximated by Riemann sums at a finite grid of times and frequencies, yielding a matrix $\mathbf{S}=(s_{mn})\in \mathbb{R}_{+}^{M\times N}$, where $m\in\{0,1,\dots,M-1\}$ is the frequency bin and $n\in\{0,1,\dots,N-1\}$ is the time index.
Thus \(\mathbf{S} = s_{mn} \approx |V_g(f)(\frac{n}{N}T_{\text{s}},\frac{m}{2MT_{\text{s}}})|^2\) for a sufficiently small sampling period $T_{\text{s}}$.

We compute $\mathcal{E}^{\text{MV}}_{(k)}$ (resp. $\mathcal{E}^{\text{SV}}_{(k)}$), the energy contained in the minimum (resp. stable) volume of the $k$-th most significant component, as
\[\mathcal{E}^{\text{MV}}_{(k)} = \frac{\sum_{m,n} (\mathbf{1}^{\text{MV}(k)}_{mn} s_{mn})}{\sum_{m,n} s_{mn}}, \; \mathcal{E}^{\text{SV}}_{(k)} = \frac{\sum_{m,n} (\mathbf{1}^{\text{SV}(k)}_{mn} s_{mn})}{\sum_{m,n} s_{mn}},\]
where $(\mathbf{1}^{\text{MV}(k)}_{mn})$ and $(\mathbf{1}^{\text{SV}(k)}_{mn})$ are the binary masks indicating the time-frequency region in the spectrogram occupied by the minimum volume and the stable volume of the $k$-th most significant component, respectively.

\subsection{Energy vs distance based tests} Depending on the nature of the dataset, hypothesis testing based on energies might outperform hypothesis testing based on distances to the diagonal. 
For instance, signals containing a noticeable amount of energy but in a short duration and narrow frequency range occupy a small region in the time-frequency plane that is less detectable using distance statistics. 
The comparison between the two choices of test statistics is further elaborated and demonstrated in the experiments sections (Secs. \ref{s:applications-detection} and \ref{s:applications-reconstruction} ).

\subsection{Accumulated persistence function} \label{s:test_apf}

In \cite{biscio2019accumulated}, Biscio et al. proposed a Monte Carlo envelope test based on a function of the PD called the \emph{accumulated persistence function} (APF).
APF condenses the information from the persistence diagram obtained from the observation, or a simulation, into a single real-valued function.
We use a Monte Carlo \emph{global} test \cite{baddeley2014tests, myllymaki2017global} based on contrasting the statistics of the APF of simulations and data in order to detect a signal.
Details of this test can be seen in the Sec. \ref{APP:APF} of the Appendix.

\section{Signal reconstruction via Persistence Diagrams} \label{s:signal_estimation}
\begin{figure}
\centering
    \includegraphics{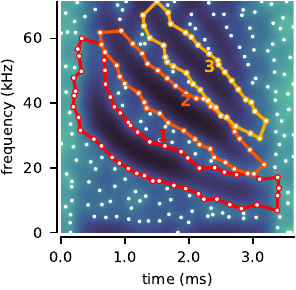}
    \includegraphics{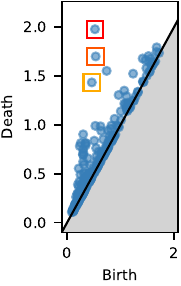}
    \caption{Example of component identification on a bat echolocation signal. The number superimposed on each component indicates the relevance of the component in the persistence diagram shown on the right.}
    \label{f:batsig}
\end{figure}

Finding persistent holes in the zeros of the spectrogram allows the identification of regions where the signal is stronger than noise, the so-called \emph{signal domain} $D$ in the TF plane \cite{Fla,BaFlCh18}.
In this section we explain how to reconstruct a signal from its noisy samples based on an estimation of $D$.


\subsection{Signal reconstruction}
In order to properly estimate $f$, we first need to determine the number of components, i.e. persistent holes in the pattern of spectrogram zeros, which we use to approximate $D$.
As an example, Figure \ref{f:batsig} shows the minimum-volume cycles corresponding the three most persistent components of a bat echolocation signal, and their corresponding birth-death pairs in the PD.
Ideally, $D$ corresponds to the union of the associated minimum (or stable) volumes.
Once $D$ is estimated, the signal itself can be estimated as \cite{Fla}:  
\begin{equation} \label{e:inv_stft}
    \hat{f}(t) = \frac{1}{g(0)}\int_{-\infty}^{+\infty} V_{g}(h)(t,\nu) \mathbb{I}_{D}(t,\nu) e^{2\i\pi t\nu} d\nu.
\end{equation}

This leads to the question of how to estimate the number of holes, which we address in the following section.

\subsection{Estimation of the number of holes} \label{s:number_holes}
Counting the number of holes is tantamount to the problem of detecting several signals.
Hence, we use once again multiple hypothesis testing but now with the goal of estimating an appropriate number of relevant components, rather than just claim the existence of a signal as we did in Sec. \ref{s:test}. Our approach to the estimation of the number of holes is underpinned by mathematical analysis carried out in Sec. \ref{APP:s:hyp_test:mult:seq:alpha_levels} of the Appendix.

\subsubsection{Multiple hypothesis testing - sequential tests}
As before, from the $B$-many noise realizations we obtain independent test statistics $X_{(k), 1}, \ldots, X_{(k), B}$ and compute $K$ $p$-values $p_{k}$ using \eqref{e:p_vals}.
Now, in contrast to what we do for simultaneous tests, we conduct these tests \emph{sequentially}, i.e. one-by-one, at varying levels of significance $\alpha_{k}$. 
If $p_1$ is significant, there is evidence for the presence of signal (as in the simple hypothesis case). 
For $k \ge 1$, if $p_{k}<\alpha_{k}$ we go on to conduct the test for $k+1$. 
We stop at $k_*$, the smallest $k$ such that $p_{k} \geq\alpha_{k}$, declaring that our estimate of $n_{\mathrm{holes}}$, the total number of holes or persistent components to extract, is $k_* - 1$. 
Thus $\hat{n}_{\mathrm{holes}} = \min \{ k \ge 0 : p_{k+1} \ge \alpha_{k + 1}\}$.
Notice that the $p$-values are computed for up to $K$ components. 
The tests can be carried out by taking the test statistics to be the distances $d_{(k)}$ to the diagonal line in the PD, or the energies $\mathcal{E}^{MV}_{(k)}$ (or $\mathcal{E}^{SV}_{(k)}$) with $p$-values computed in a similar way as before. 

\subsubsection{Choice of levels of significance}
We consider three different choices of $\alpha_k$ for sequential testing: (i) Polynomial decay: $\alpha_k = \frac{\alpha}{k^{m}},\quad m \ge 0$; (ii) Geometric decay: $\alpha_k = \alpha \beta^{k}, \quad \beta \in (0, 1)$ and (iii) Bonferroni correction.
Both polynomial and geometric decay can be shown to control the FWER under the assumption that the $p$-values are independent.
Further details of such corrections can be seen in Sec. \ref{APP:s:hyp_test:mult:seq} of the Appendix.

\section{Applications: Detection of synthetic and real-world signals}
\label{s:applications-detection}

This section describes the application of the proposed approaches for signal detection based on TDA and the zeros of the spectrogram.
First, the results of the tests for signal detection considering a synthetic signal are reported.
Then we illustrate the performance of the detection tests on gravitational wave signals.
Alpha complexes and persistence diagrams were computed using GUDHI's Python front-end \cite{gudhi:urm}. 
The code to reproduce the experiments in this paper is provided\footnote{\url{https://github.com/rbardenet/persistence}}.

Finally, we define the signal-to-noise ratio (SNR) of a signal-plus-noise mixture $h=f+\xi$ as
\begin{equation}
    \text{SNR}(h) = 10 \log_{10} \left(\Vert f\Vert^2/\mathbb{E}\left[\Vert \xi \Vert ^{2}\right]\right)\quad \text{(dB)}.
\end{equation}

\subsection{Synthetic chirp detection} \label{s:results_det}
\subsubsection{Test signal specification}
Following similar experiments from \cite{BaFlCh18, PaBa22, PaBa24Sub}, we first demonstrate the performance of the detection tests described in Sec. \ref{s:test}.
To that end, we consider an alternative hypothesis $H_1$ to $H_0$, where a signal is hidden in additive noise. 
For the signal, we consider a discrete linear chirp length $N=1024$, that is a signal whose instantaneous frequency increases linearly with time.
The chirp has duration $N/2$ and is centered in the observation window.
We can then estimate the \emph{power} of each test, i.e. the probability under the model consisting of the chirp plus white noise of rejecting $H_0$.
Figure \ref{f:results_det} shows the power of different tests for several SNRs.
The spectrogram of the probed signal is also shown in the figure.

\begin{figure}[h!]
    \centering
        \includegraphics[width=0.5\textwidth]{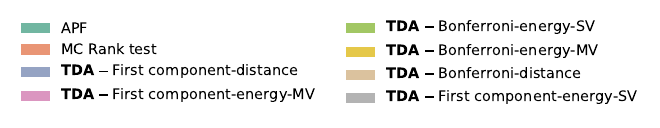} \\    
         \includegraphics[scale=1.0]{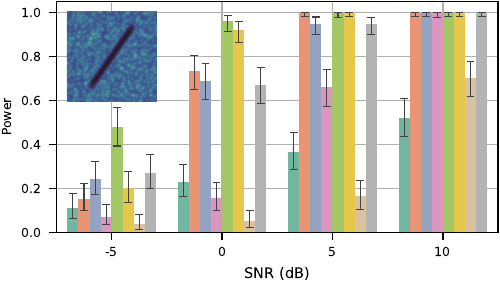} 
        \caption{Statistical power of the detection tests described in Sec. \ref{s:test} as a function of the SNR. Each color and marker indicate a different test, the name of which is indicated on top of the figure. Error bars indicate the $95\%$ Clopper-Pearson confidence intervals, Bonferroni-corrected across the 32 comparisons (4 SNR levels $\times$ 8 approaches). The spectrogram of the probed signal with $\text{SNR}=10$ dB is superimposed.}
        \label{f:results_det}
    \end{figure}

\subsubsection{Detection tests}
We compared a total of eight tests: six variations of TDA-based tests (Sec. \ref{s:test}), a test based on the APF (Sec. \ref{s:test_apf}), and a Monte Carlo test based on ranked envelopes (taken from \cite{miramont2024benchmarking}).
We consider the latter two as a baseline for signal detection tests based on spectrogram zeros.

Two TDA-based tests rely on the distance to the diagonal in the PD (Sec. \ref{s:test_tda}). 
One uses the distance to the diagonal of the furthest birth-death pair in the PD, i.e. the first or most relevant component (denoted ``First component-dist'' in Fig. \ref{f:results_det}).
The other searches for a statistically relevant component among the $K=30$ components with the largest distance to the diagonal, using a Bonferroni correction to control the type I error (``Bonferroni-dist'' in Fig. \ref{f:results_det}).

The remaining four tests based on TDA use the energy associated with the components from the PD as test statistic (Sec. \ref{s:test_tda}).

The energy was computed from the MVs or the SVs (defined in Sec. \ref{s:pers_homology-3}) corresponding to the first $K$ persistent components.

\subsubsection{Detection results}
Figure \ref{f:results_det} shows that the proposed tests based on the energy of the first $K=30$ significant components have the highest detection power from all the evaluated tests (``Bonferroni-energy-SV'' and ``Bonferroni-energy-MV'').
From these, the tests based on stable volumes (``Bonferroni-energy-SV'') are more powerful than the one based on minimum volumes, specially for low SNRs.
These tests have a better performance than the baseline (``MC Rank test'') for all the considered signal lengths.
The two other tests based on the energy (``First component-energy-MV'' and ``First component-energy-SV'') have a very similar performance.

In comparison, the tests based solely on the distance to the diagonal (``First component-dist'' and ``Bonferroni-dist'') and the test based on the APF show lower power than the baseline for the detection of a single component, even for high SNRs.

\subsection{Gravitational waves} \label{s:gw_detection}

\subsubsection{The Gravitational Waves problem}
Gravitational waves (GWs) are spacetime perturbations produced by astrophysical events such as merging black holes. 
Their detection enables tests of general relativity and studies of extreme cosmic phenomena. However, detecting GWs is a formidable challenge because of their exceptionally low SNR.

For illustration purposes, we have selected GW data from the \textit{Kaggle Gravitational Waves Detection competition} \cite{kaggle-g2net}. 
The dataset consists of realistic simulations of measurements from interferometers.
Each data sample contains a signal spanning 2 seconds and sampled at $2048$ samples-per-second.

\subsubsection{Stratifying GW signals via adaptive thresholding}
Since SNR values are not available, we compute a hard threshold \cite{donoho1994ideal,mallat2008wavelet, pham2018novel} on the spectrogram of each signal in order to segment the dataset.

We define the normalized spectrogram maximum (NSM), $\hat{\zeta} = \hat{\gamma} / \hat{\sigma}$ , where $\hat{\gamma}=\max_{u,v}|V_g(f)(u,v)|$ is the maximum modulus of the STFT across the observation window on the time-frequency plane, and $\hat{\sigma}$ is the standard deviation of the noise, estimated from the signal using a median absolute deviation estimator \cite{donoho1994ideal, mallat2008wavelet}.

\begin{figure}[h!]
    \centering
    \includegraphics{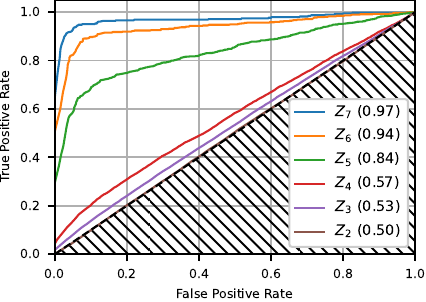}
    \caption{ROC curves for different thresholds $\zeta\in\{2,3,4,5,6,7\}$ using energy as test statistics. The values shown in the parentheses are the area under the ROC curves for each threshold.}
    \label{f:ROC_b}
\end{figure}

The NSM serves as a practical proxy for SNR; signals with higher NSM values are expected to have higher SNR values.
We then consider the class of GW signals, which is the union of the subsets $Z_i$ for $i \in \{2, 3, 4, 5, 6, 7\}$, where each $Z_i$ contains signals with $\hat{\zeta} \ge i$. Note that $Z_i\subset Z_j$ if $i\ge j$.

We collected 50,000 signal samples for $Z_2$, from which $|Z_3|=49,995$, $|Z_4|=14,542$, $|Z_5|=1,528$, $|Z_6|=670$, and $|Z_7|=255$. 
Even with a threshold of $\hat{\zeta}= 5$, we retained a large enough number of signals for robust analysis.
Examples of spectrograms of the signals in the different subsets can be found in Sec. \ref{APP:s:gw_detection} of the Appendix.
Most signals in $Z_7$ are visually identifiable in their spectrograms, while $Z_5$ and even $Z_6$ already contain numerous more subtle signals.

We observe that our NSM thresholding is only used to create subsets of the data for comparison purposes; it plays no role in the detection procedure whatsoever. We also note that our approach is a general signal analysis technique that is being applied to the challenging problem of GW data as a particular application, as opposed to most state-of-the-art tools that are highly specialized and tailored to the GW problem. We envisage that combining our methods with specialized GW data analysis tools have the potential for producing powerful signal processing outcomes for GW data.

\subsubsection{Gravitational waves detection}

Figure \ref{f:ROC_b} shows the Receiver Operating Characteristic (ROC) curves for the tests based on $\mathcal{E}^{\text{SV}}_{(\ell)}$, $\ell=1,\dots,K$, the energies contained in stable volumes associated with the $K$ most significant components (see Sec. \ref{s:energy test}).

For this application, we chose $K=5$ and applied simultaneous hypothesis tests with Bonferroni correction (see Sec. \ref{s:simultaneous tests}).

ROC curves were computed using class-balanced datasets for each $Z_i$. As one would expect, signals from datasets corresponding to higher thresholds are likelier to have higher SNRs, making their features easier to detect.

We achieved an area under the ROC curve of $0.97$ for the highly curated dataset $Z_7$ (see Fig. \ref{f:ROC_b}).
For subset $Z_5$, which includes a large number of more challenging examples, the area under the ROC curve is still high, to the tune of $0.84$.
Including the examples with the lowest SNRs reduces the performance, as our generic approach is not tailored for GW detection---unlike other competing methods that rely on matched filtering. Future work will focus on enhancing the detection power of our tests (see the discussion in Sec. \ref{s:discussion}).

\subsection{Detection summary}
In Sec. \ref{s:results_det} we compared detection tests based on two different choices of test statistics. 
The first uses the distance to the diagonal in the PD as a test statistic, whereas the second uses the energy of the most persistent components as a test statistic.
The latter turned out to be more effective for signal detection (see Fig. \ref{f:results_det}). 
When comparing the power of the distance-based tests (``dist'' in Fig. \ref{f:results_det}) with those based on energy (``energy'' in Fig. \ref{f:results_det}), it seems that large values of the spectrogram play a key role in improving detection tests based solely on the zeros of the spectrogram.


\begin{figure}[h!]
    \centering
    \hspace{-0.7cm}
    \begin{subfigure}[b]{0.33\textwidth}
        \includegraphics{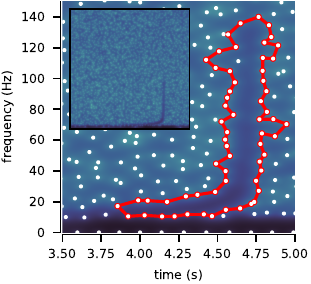}
    \caption{}\label{f:ligo_a}
    \end{subfigure} 
    \begin{subfigure}[b]{0.15\textwidth}
        \includegraphics{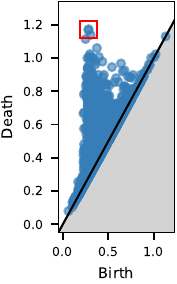}
    \caption{}\label{f:ligo_b}
    \end{subfigure}
    \\
    \begin{subfigure}[b]{0.5\textwidth}
        \includegraphics{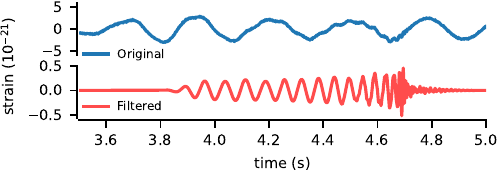}
    \caption{}\label{f:ligo_c}
    \end{subfigure}
    \caption{Example of gravitational wave detection and estimation. \textbf{(a)} Detail of the spectrogram of the signal, the boundary of the stable volume corresponding the detected component in red (the complete spectrogram is superimposed for reference). \textbf{(b)} Persistence diagram. The red square indicates the detected component. \textbf{(c)} Time domain comparison of the original and the detected, filtered signal. Notice the difference in the scale of the vertical axis.}
    \label{f:ligo}
\end{figure}

\section{Applications: Signal Reconstruction and Audio Improvement} \label{s:applications-reconstruction}

In this section we demonstrate how we can reconstruct signals by identifying the volumes in the TF plane corresponding to the most persistent components.
We first show an example of the reconstruction of a GW from the dataset used in Sec. \ref{s:gw_detection}.
Then we quantitatively characterize the performance of our approach with synthetic and audio signals.

\subsection{Gravitational Wave Reconstruction}
Figure \ref{f:ligo_a} shows the spectrogram of a gravitational wave taken from the dataset.
Only one component was detected, as expected, and Figure \ref{f:ligo_a} shows that the boundary of the corresponding stable volume (in red), which correctly encloses the signal domain of the gravitational wave with its characteristic shape of an \emph{exponential} chirp \cite{chassande1999time,ligo2020guide}.
Figure \ref{f:ligo_b} shows the associated PD, where the birth-death pair corresponding to the detected component is highlighted in red.
The original and estimated signals are shown in Fig. \ref{f:ligo_c}.

In contrast to \cite{Fla}, our approach based on persistent homology does not depend on a parameter that fixes the minimum scale of the holes (such as a minimum length of the edges of the triangles).
Instead, one needs to determine the number of components to be estimated from the signal.
This is automatically and adaptively done by the detection tests proposed in Sec. \ref{s:number_holes}, so that the method is fully based on the data.

\subsection{Synthetic Signal Reconstruction}
We show quantitative results for signal reconstruction using three synthetic signals with different time-frequency structures. Their spectrograms can be seen in Fig. \ref{f:synth_results}.

\begin{figure*}
    \includegraphics[width=\textwidth]{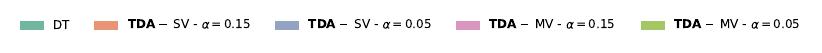}
    \begin{subfigure}[b]{0.34\textwidth}
    \hspace{-0.6cm} \includegraphics{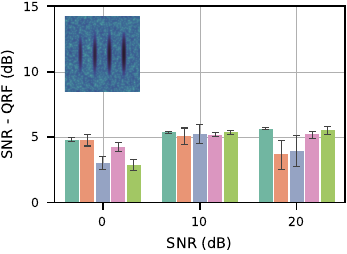}
    \caption{Impulses.} \label{f:synth_results_a}    
    \end{subfigure} 
    \begin{subfigure}[b]{0.32\textwidth}    \includegraphics{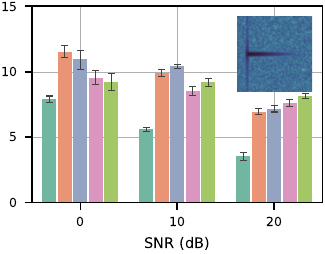}
    \caption{Sharp attack.} \label{f:synth_results_b}   
    \end{subfigure} 
    \begin{subfigure}[b]{0.32\textwidth}
    \hspace{0.4cm}
    \includegraphics{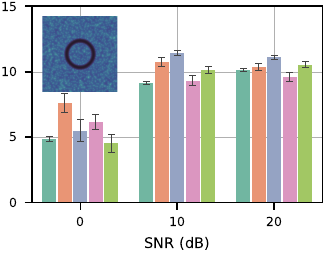}
    \caption{Hermite function.} \label{f:synth_results_c}    
    \end{subfigure}

    \caption{\textbf{(a)}-\textbf{(c)} Synthetic signal reconstruction performance measured by SNR gain, i.e. SNR-QRF (dB) for, using minimum and stable volumes. Bars indicate the average over $200$ realizations. Error bars indicate the $95\%$ confidence interval, Bonferroni-corrected across the 15 comparisons (3 SNR levels $\times$ 5 approaches). The proposed approach is indicated as \textbf{TDA}. DT stands for the Delaunay triangulation approach.} 
    \label{f:synth_results}
\end{figure*}

\subsubsection{Performance metric}
The reconstruction performance of synthetic signals was measured using the quality reconstruction factor \[\operatorname{QRF} = 10 \log_{10}\left(\Vert f \Vert ^2_{2} / \Vert f-\hat{f} \Vert^{2}_{2}\right)\; \text{(dB)},\]
where $f$ is the original noiseless signal and $\hat{f}$ is a denoised approximation \cite{meignen2016adaptive}.
QRF can be understood as an estimation of the SNR of the output signal after reconstruction, which should be higher than the SNR of the noisy input signal to indicate an improvement.
Therefore the difference $\text{QRF}-\text{SNR}$ (dB) is equivalent to the gain in terms of SNR after processing.
Notice that one can only compute the QRF if the noiseless reference is available, hence QRF is only reliably computable from synthetic signals.

\subsubsection{Results}
Figure \ref{f:synth_results} compares the gain in SNR i.e. $\text{QRF}-\text{SNR}$ (dB), using MVs and SVs for several input SNRs and three different signals. 
Notice that we also vary the significance of the detection tests used to count the number of components ($\alpha=0.05$ and $\alpha=0.15$).
Using SVs and a higher $\alpha$ results in a better reconstruction in the case of low SNRs (for instance, $0$ and $10$ dB).
In most cases the results obtained with TDA are better or comparable to those obtained with DT \cite{Fla}.

\subsubsection{Discussion of results}
The situation where one \emph{signal} component corresponds to more than one \emph{homology} component can occur in practice, particularly for low SNRs.
Detection tests that automatically determine the number of relevant homology components to extract alleviate this issue.
However, such tests are, in practice, limited by the SNR. 
If detection tests are not effective, the user can always determine the number of homology components to extract by hand. 
Even depending on \emph{a priori} detection of components based on detection tests, signal estimation performance with the approach based on TDA is comparable or better than previous approaches based on spectrogram zeros (compare with the DT method in Fig. \ref{f:synth_results}).
Increasing $\alpha$ results in a larger number of homology components detected, and it could be an acceptable strategy when a signal is known to be present beforehand, with reducing the noise the primary goal rather than detecting the signal (as done in \cite{Fla}).


\subsection{Audio improvement}

\begin{figure}

    \centering
    \includegraphics[width=0.5\textwidth]{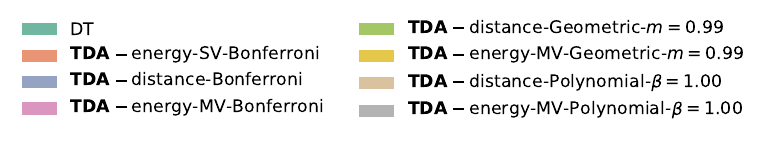} \\
    \includegraphics{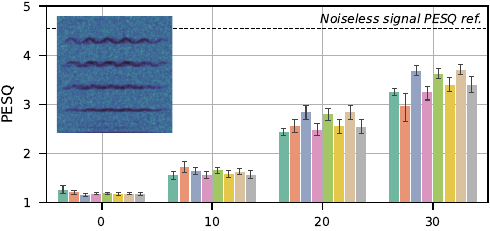}
    
    \caption{Output audio quality measured by PESQ \cite{recommendation2001perceptual} after applying variations of the proposed approach (indicated as \textbf{TDA}) and the Delaunay triangulation (DT) based method for a short cello recording. Error bars indicate the $95\%$ confidence intervals, Bonferroni-corrected across the 32 comparisons (4 SNR levels $\times$ 8 approaches). }\label{f:results_pesq}
\end{figure}
\begin{figure*}[h!]
    \resizebox{0.99\textwidth}{!}{%
    \begin{subfigure}[b]{0.43\textwidth}
    \includegraphics{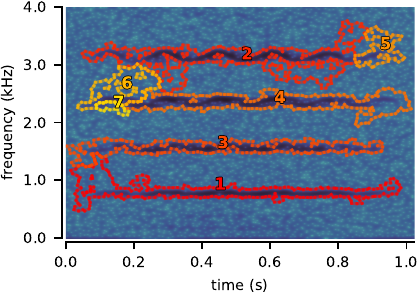} 
    \caption{Minimum-volume cycles.}\label{f:cello_detail_a}
    \end{subfigure}
    \begin{subfigure}[b]{0.43\textwidth}
    \includegraphics{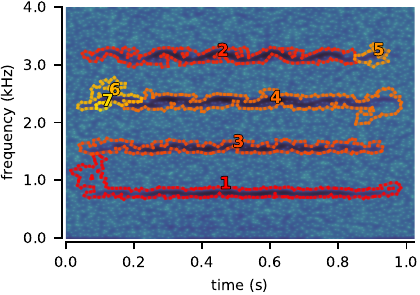}
    \caption{Stable volumes.}\label{f:cello_detail_b}
    \end{subfigure}
    \begin{subfigure}[b]{0.13\textwidth}
    \includegraphics{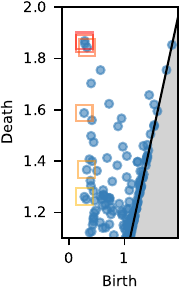}
    \caption{}\label{f:cello_detail_c}
    \end{subfigure}}
    
    \caption{Extracted volumes for the cello signal. \textbf{(a)} Minimum-volume cycles. \textbf{(b)} Boundary of stable volumes. The number indicates the relevance of the component in the persistent diagram. \textbf{(c)} Persistence diagram.}
    \label{f:cello_detail}
\end{figure*}

We now explore the reconstruction performance of an audio signal corresponding to a cello excerpt, sampled at $8$ kHz and contaminated with white Gaussian noise.
Figure \ref{f:results_pesq} shows the spectrogram of the signal with four harmonics.
In contrast to the synthetic case, there is no ground-truth, noiseless version of the real-world signal available.

\subsubsection{Performance metric}
We compare the audio quality of the estimated signal by means of the \emph{perceptual evaluation of speech quality} (PESQ) metric \cite{pesq_python, recommendation2001perceptual}.
Although originally conceived to assess speech quality over telecommunication networks, PESQ has been used as a general audio quality metric, even for non-speech signals \cite{torcoli2021objective}. 
A large value of PESQ is then interpreted as indicative of better audio quality.

\subsubsection{Results}
Figure \ref{f:results_pesq} shows the PESQ values for several SNRs (from $0$ dB to $30$ dB).
As a reference, the value of PESQ for the signal not contaminated with noise, is also shown.
We explored variations of our TDA-based approach method, modifying the procedure by which the number of components to extract is determined (see step 4 in Algorithm \ref{alg:sig_est} described in the Appendix).

The three $\alpha$ corrections from Sec. \ref{s:number_holes}, i.e. Bonferroni, geometric decay and polynomial decay, were pairwise combined with the two test statistics described in Sec. \ref{s:test}, i.e. $d_{(k)}$ and $\mathcal{E}^{SV}_{(k)}$. 
The significance level was fixed to $\alpha=0.15$ in this case.

For $\text{SNR}>10$ dB, Figure \ref{f:results_pesq} shows that when using ``TDA-distance'' the results tend to be better than when the energy is used as a test statistic (``TDA-energy'').
The choice of $\alpha$ correction seems to play a less relevant role.
In the low SNR regime, the use of $\mathcal{E}_{(k)}$ in combination with stable volumes increases the performance, similar to the results obtained for synthetic signals.

\subsubsection{Discussion of results}
The results obtained for the proposed approach are comparable to those obtained with the DT approach (``DT'' in Figure \ref{f:results_pesq}) in the case of low SNR, while for higher SNRs the results are better than with DT.
Figure \ref{f:cello_detail_a} shows the minimum-volume cycles identified for the cello signal.
Real-world signals might present an intricate structure in the spectrogram as well non-white noise (contrary to what it was assumed in the Monte Carlo simulations used to determine the number of components).
Such factors can produce instabilities in the MVs. 
However, stable volumes can alleviate this issue, as depicted in Fig. \ref{f:cello_detail_b}.

An important difference with preexisting approaches is that using TDA one can simultaneously identify individual components and their domain (see Fig. \ref{f:batsig} or Fig. \ref{f:cello_detail} in the main document) without a post-processing heuristic that groups triangles into components (as done in \cite{meignen2016adaptive}).

\begin{figure*}[h!]
    \hspace{-1cm}
    \begin{subfigure}{0.55\textwidth}
    \includegraphics{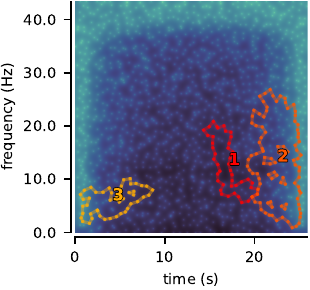}
    \includegraphics{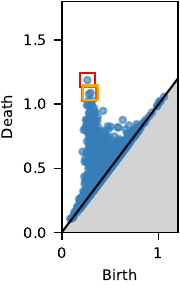}
    \caption{Single-channel, seizure-free EEG.}\label{f:seizures_a}
    \end{subfigure}
    \begin{subfigure}{0.55\textwidth}
    \includegraphics{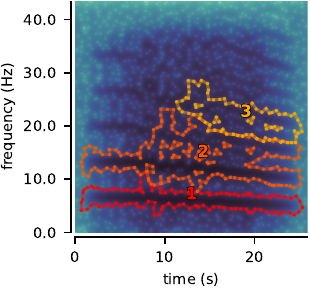}
    \includegraphics{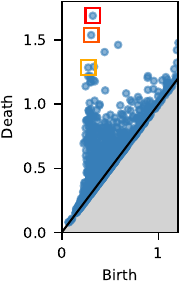}
    \caption{Single-channel seizure EEG.} \label{f:seizures_b}
    \end{subfigure}
    \caption{\textbf{(a)} Spectrograms of EEG with no seizure present. \textbf{(b)} EEG with a seizure. The minimum-volume cycles corresponding to the three more persistent components are shown in each case. The number superimposed on each component indicates the relevance of the component in the persistence diagram shown on the right of each subfigure.}
    \label{f:seizures}
\end{figure*}

\section{Further applications, discussion and future work} \label{s:discussion}
\subsection{Further applications}
A key feature of our approach is the general scope of its applicability. While its utilization on acoustic signals and gravitational wave signals has been discussed in earlier sections, we hint at the scope and versatility of our methods here with a possible usage in neuroscience, as a sample of other future applications to non-acoustic domains.
This application consists in the characterization of TF \emph{signatures} using TDA and the tools described before to extract \emph{features} that can be fed to downstream classification (or regression) models.

To demonstrate the potential of such a framework for the analysis of time-series, we describe an example based on non-acoustic signals from a intracranial electroencephalograms (EEGs) dataset \cite[see datasets C, D and E]{andrzejak2001indications}.
These EEGs can be classified as having a seizure present, i.e. pathological signals, or reflecting seizure-free brain activity.
Seizures have been associated with a more synchronized activity in EEGs, whereas typical brain activity could be succinctly described as noise-like \cite{andrzejak2001indications}.
Because of these differences, when a seizure is present, EEGs tend to be more organized and \emph{rhythmic}, which translates into a different signature in the TF plane consisting of more persistent holes in the pattern of spectrogram zeros than seizure-free EEGs.
Figure \ref{f:seizures} shows the difference between two signals obtained from the dataset and their corresponding PDs.
Figure \ref{f:seizures_a} shows the spectrogram corresponding to a seizure-free EEG, whereas Fig. \ref{f:seizures_b} corresponds to an EEG with a seizure. 
Comparing the persistence diagrams from Figs. \ref{f:seizures_a} and \ref{f:seizures_b}, the latter has more persistent holes, linked to a pattern of harmonic-like components in the spectrogram.  

\subsection{Discussion and future work}
To our knowledge, this work is arguably the first to systematically apply TDA as a broad-based methodological toolbox for spectrogram zero-based signal processing. The limited amount of existing literature that appear to be related to our work often involves specialised applications and particular kinds of signals. For instance, \cite{RFDHB24} investigates the problem of audio obfuscation, where topological features based on Betti numbers are invoked. In a different direction, \cite{guillemard2011groupoid} proposes a groupoid $C^*$ algebra-based method to combine dimensionality reduction and generic persistent homology for recovering topological features of the input signal, while \cite{boche2013signal} works on filtered level sets of the spectrogram for feature extraction. As alluded before, persistence diagrams have also been converted to a functional statistic called the \emph{accumulated persistence function} in spatial statistics \cite{biscio2019accumulated}.

While the APF can be used in conjunction with classical spatial statistical tests, our approach is more tailored to time-frequency detection and reconstruction, which we have numerically demonstrated (cf. the detection power of APF in \ref{f:results_det}). 

The signal detection methods described in this paper, as well as the detection tests introduced in \cite{BaFlCh18, PaBa22, ghosh2022signal}, fall under a different paradigm than traditional signal detection methods where a model of the signal is available, like matching filter techniques and variations thereof \cite{chassande1999time, whalen2013detection}.
In this sense, the signal detection test defined in Sec. \ref{s:test} constitutes a signal-agnostic alternative, relying on a model of the noise and Monte Carlo simulations rather than a parametric model of the signal.
Such tests might find future applications to the detection of so-called \emph{continuous} gravitational waves, for which matching possible signal templates is computationally ineffective \cite{ashok2021new}.
Furthermore, extending the proposal to the continuous wavelet transform and its zeros \cite{BaHa19} could further improve detection performance in certain applications.
Connections with recent work on hypothesis testing based on the PD \cite{bobrowski2023universal} are to be explored in future work, where applicability of certain \emph{universal} null models could be considered in order to increase the power of our tests.

\section{Conclusion} 
In this work, we propose a very general approach to the detection and reconstruction of signals from noisy measurements, leveraging a robust interplay between time-frequency analysis based on spectrogram zeros and state-of-the-art concepts and techniques from topological data analysis (TDA). Our approach is undergirded by a solid statistical foundation based on the theory and methods of multiple hypothesis testing, resulting in the possibility of providing scientifically principled predictions augmented with confidence guarantees. In terms of ideas, our method interfaces with a range of disparate scientific domains -- time-frequency analysis (which is a staple tool in the processing of a wide variety of time-dependent signals), hyperuniformity of the spectrogram zero patterns (which is of interest in statistical physics),  the theory of stable and minimum volumes and that of persistence trees (of recent interest in topological data analysis), and  conceptual ingredients like FWER and Bonferroni correction (which are of fundamental importance in statistical inference). 

In terms of applications, we are able to bring our methods to bear on a highly versatile class of time-varying signals, including acoustic data from musical instruments (of key interest in the field of audio and speech recognition), Gravitational Wave (GW) data from the celebrated LIGO experiment (a burgeoning area of interest in physics and astronomy), and futuristic applications towards EEG time series data (of interest in medical science). Our topologically-based methods appear to demonstrate good performance with acoustic data especially in low SNR regimes, and allow for the possibility of signal reconstruction in highly challenging problems such as gravitational waves (where state-of-the-art methods focus largely on the question of signal detection). 

We demonstrate the possibility of invoking the powerful machinery of TDA for very general classes of signal processing problems, without the requirement of the a-priori presence of prominent geometrical structures in the signals (which is a feature of most classical TDA applications). We believe that this would provide engineers with a new and powerful suite of signal processing tools on one hand, and topological data analysts with a novel and interesting class of problems on the other. Our work opens up the possibility of integrating topology-inspired methods with dedicated state-of-the-art techniques for specific signal processing tasks (such as neural network-based approaches for GW) for improved outcomes, and suggests natural extensions to other time-varying signal classes in statistics, medicine and data science.

\section*{Acknowledgment}

The authors wish to thank Curtis Condon, Ken White, and Al Feng of the Beckman Institute of the University 
of Illinois for the bat data used in Fig. \ref{f:batsig} and for permission to use it in this paper, and Alvin Chua of the National University of Singapore for helpful discussions regarding gravitational wave data. 
JMM and RB were supported by ERC grant \textsc{Blackjack} ERC-2019-STG-851866 and ANR grant \textsc{Baccarat} ANR-20-CHIA-0002. KAT was supported by the NUS Research Scholarship. 
 SSM was partially supported by the INSPIRE research grant DST/INSPIRE/04/2018/002193 from the Dept.~of Science and Technology, Govt.~of India, 
and a Start-Up Grant from Indian Statistical Institute. SG was supported in part by the Singapore MOE grants R-146-000-250-133, R146-000-312-114, A-8002014-00-00 and MOE-T2EP20121-0013.

\newpage
\appendix
\setcounter{equation}{0}

\appendixpage

\section{Tools from topological data analysis} \label{APP:s:tda}
This section gives an in-depth description of the main tools we take from topological data analysis and that we use throughout the main document, namely persistent homology, persistent diagram, minimum volumes and stable volumes.

\subsection{Persistent homology} \label{APP:s:pers_homology}
We extract a few definitions from \cite{EdHa22}, to which we refer for more details. 
Let $S\subset \mathbb{R}^2$ be a finite set of points, and $r>0$. 
Let $B_\bu(r)$ be the closed ball of center $\bu\in\mathbb{R}^2$ and radius $r$.
We consider 
$$ 
    \mathcal{A}(r) = \left \{\sigma\subseteq S: \bigcap_{\bu\in\sigma} (B_\bu(r)\cap V_\bu)\neq \emptyset\right\}, 
$$
where $V_\bu$ is the (closed) cell associated to $\bu$ in the Voronoi diagram of $S$. 
In other words, $V_\bu$ is the set of points in $\mathbb{R}^2$ that are closer to $\bu$ than to any other point in $S$.

For $r>0$, $\mathcal{A}(r)$ is called the $\alpha$\emph{-complex} of $S$ of scale $r$ \cite[Section III.4]{EdHa22}.
The $\alpha$-complex is an example of a \emph{simplicial complex}; Other common choices include the Delaunay and Vietoris-Rips complexes \cite[Chapter III]{EdHa22}.
Elements of any $\mathcal{A}(r)$ are subsets of $S$, and are called \emph{simplices}.
A simplex formed by $p+1$ affinely independent points is called a $p$-simplex, and is identified with its convex hull: a $0$-simplex is thought of as a point, a $1$-simplex as a segment with positive length, a $2$-simplex as a non-flat triangle. $3$-simplices and above do not exist in dimension $2$.
For simplicity, we assume henceforth that no three points of $S$ are collinear, so that all triplets define a non-degenerate triangle.

Figure~\ref{f:alpha_complex} illustrates $\mathcal{A}(r)$ for different values of $r$, with $S$ the set of zeros of a spectrogram.
Momentarily considering $r$ as the \emph{age} of the construction, the basic intuition is that as $r$ grows, holes appear and die in $\mathcal{A}(r)$, and that holes that stay alive for a long time are likely to be indicative of some underlying structure, in our case a signal.
Persistent homology is a set of tools to define and keep track of these holes.

Define $\mathcal{A}(0) = S$.
When $r\ll 1$, $\mathcal{A}(r)= \mathcal{A}(0)$.
As $r$ grows, $\mathcal{A}(r)$ takes on a finite number of values, progressively including all subsets of $S$ at some finite time $r_{\max}>0$.
In short, for any $0<r<s$,
\begin{equation}
    S=\mathcal{A}(0)\subset\mathcal{A}(r)\subsetneq\mathcal{A}(s) \dots \subset \mathcal{A}(r_{\max})=\mathcal{P}(S).
    \eqlabelApp{e:filtration}
\end{equation}
The increasing sequence of sets $\mathcal{A}(r)$, $r\geq 0$, is called a \emph{filtration}. 
Since each complex differs from the previous one by one simplex, we can equivalently describe the filtration \ref{e:filtration} by an ordered list of simplices $\left( (r_1, \sigma_1),\dots, (r_J, \sigma_J) \right)$ for some $J\geq 1$, where $r_{i}$ is the value of $r$ for which $\sigma_{i}$ is added to the filtration.

\begin{figure*}
    \centering
    \hspace*{-5cm}\includegraphics{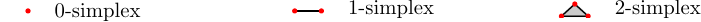} \\

    \begin{subfigure}{0.8\textwidth}
        \hspace{-0.5cm}
        \includegraphics{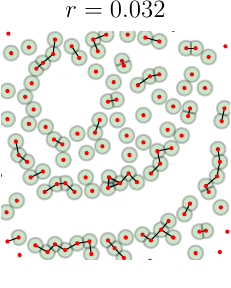}
        \includegraphics{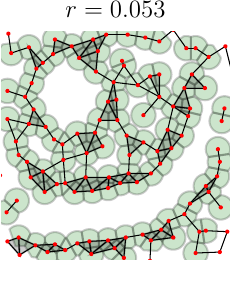}
        \includegraphics{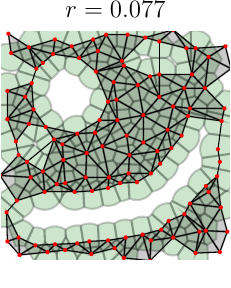}
        \caption{}
        \label{f:alpha_complex}
    \end{subfigure}   
    \hfill
        \begin{subfigure}{0.19\textwidth}
        \hspace*{-0.5cm}
        \includegraphics{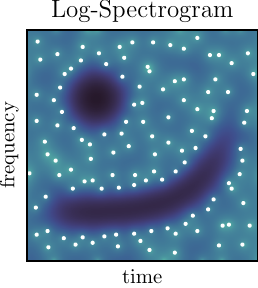}
        \caption{}
        \label{f:alpha_complex_b}
    \end{subfigure}
    \vspace*{1cm}
    
    \begin{subfigure}[b]{0.35\textwidth}
    \hspace*{-1cm}
    \includegraphics{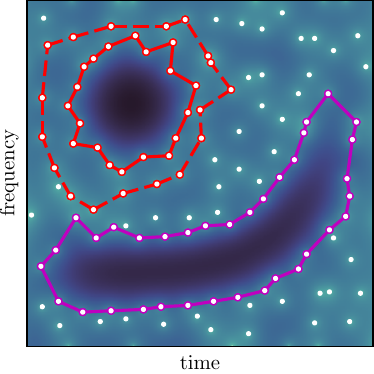}
    \vspace*{0.35cm}
    \caption{Signal+noise and cycles.}
    \label{f:cycles}
    \end{subfigure}
    \hfill
    \begin{subfigure}[b]{0.3\textwidth}
             \includegraphics{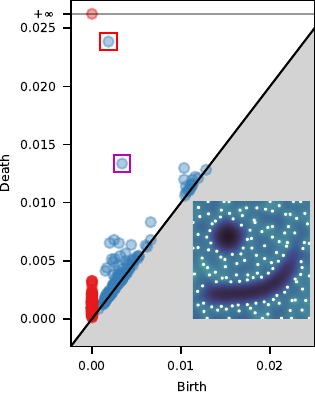} \\
    \caption{Signal+noise.}
    \label{f:pd_b}
    \end{subfigure}
    \hfill
    \begin{subfigure}[b]{0.3\textwidth}
    \hspace{0.5cm}
             \includegraphics{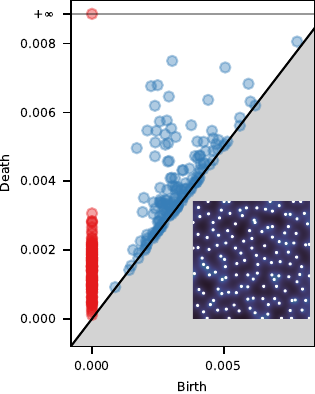} \\ 
    \caption{Noise.}
    \label{f:pd_a}
    \end{subfigure}
    \hfill

    \caption{\textbf{(a)} Simplicial complexes in a filtration for three values of $r$. Red points, black segments, and shaded triangles are, correspondingly, 0-simpleces, 1-simpleces and 2-simpleces. For each point $\bu$, the intersection $B_\bu(r)$ and $ V_\bu$ is shown in green, where $B_\bu(r)$ is a closed ball centered at $\bu$ with radius $r$, and $ V_\bu$ is the Voronoi cell associated to $\bu$; see Sec. \ref{APP:s:pers_homology}. \textbf{(b)} Spectrogram of a mixture of a two-component signal and real white Gaussian noise. The zeros are shown as white dots. \textbf{(c)} Minimum-volume cycles (continuous lines) and a non-minimum cycle (dashed line) superimposed to the log-spectrogram of a signal. \textbf{(d)} and \textbf{(e)} show persistence diagrams $\mathcal{D}_0$ (red dots) and $\mathcal{D}_1$  (blue dots). Each dot correspond to a birth-death pair. \textbf{(d)} Persistence diagrams for a mixture of signal and noise, the spectrogram of which can be seen in the right panel. For $\mathcal{D}_{1}$, the two squared birth-death pairs are significantly further away from the diagonal, corresponding to long-lived \emph{holes} caused by the two-component signal. \textbf{(e)} PD for white Gaussian noise.  }
    \label{f:pd}
\end{figure*}

Denote a formal weighted sum $c= \sum a_i \sigma_i$ of $p$-simplices $\sigma_1, \dots, \sigma_k$, with $a_1, \dots, a_k \in \mathbb{Z}/2\mathbb{Z}$, as a $p$-\emph{chain}, .
Let $r>0$. 
The $p$-chains of $\mathcal{A}(r)$, with the term-by-term sum modulo $2$, form a group\footnote{We reserve sans serif characters for groups.} $\mathsf{C}_p(r)$.
A natural morphism is then the boundary operator $\partial_{p+1}:\mathsf{C}_{p+1}(r)\rightarrow \mathsf{C}_p(r)$, defined as follows.
For a $(p+1)$-simplex $\sigma$, define its \emph{faces} as all its subsets of cardinality $p$. 
The boundary $\partial_{p+1} \sigma$ is then the formal sum of all the faces of $\sigma$, with constant weight $1$.
We then extend $\partial_{p+1}$ to $\mathsf{C}_p(r)$ by linearity, as
$$ 
    \partial_{p+1} \sum a_i\sigma_i := \sum a_i \partial_{p+1}\sigma_i.
$$
The kernel of $\partial_{p}$ is the subgroup $\mathsf{Z}_p(r) < \mathsf{C}_p(r)$ of so-called $p$-\emph{cycles}, and the image of $\partial_{p+1}$ is the subgroup 
$\mathsf{B}_p(r)<\mathsf{C}_p(r)$ of $p$-\emph{boundaries}.
A fundamental result is that $\mathsf{B}_p(r)<\mathsf{Z}_p(r)$, i.e., that a $p$-boundary is a $p$-cycle. 

While introducing algebraic notions might seem an unnecessary overhead at first sight, it crucially allows the translation of topological questions into algebraic manipulations that are amenable to computer implementation. 
In particular, a key notion is the $p$-th homology group $\mathsf{H}_p(r) = \mathsf{Z}_p(r)/\mathsf{B}_p(r)$. 
In words, an element of $\mathsf{H}_p(r)$ (i.e., an equivalence "class" in the quotient) represents a set of $p$-chains obtained by taking a $p$-cycle of $\mathsf{H}_p(r)$, and adding to it any $p$-boundary in $\mathsf{H}_p(r)$. 
For illustration, Figure~\ref{f:holes_and_cycles} gives three cycles in $\mathsf{Z}_1(r)$ for $r$ large enough.
The two cycles in red, which go around the same hole, are in the same equivalence class: one can indeed write the ``larger" red cycle as the sum of the ``smaller" one and the boundary ``in between" the two cycles. 
On the contrary, one cannot obtain the magenta cycle by summing any of the two red cycles and a boundary: the magenta cycle corresponds to a different class in $\mathsf{H}_1(r)$.
Intuitively, equivalence classes in $\mathsf{H}_p(r)$ represent the holes we are trying to detect: for $p=1$, a class contains all loops that encircle a given hole.

Now, as $r$ grows, $\mathsf{H}_p(r)$ gains some classes and loses some, and the \emph{persistence diagram} is a way of keeping track of these events.
Formally, let $r_0>0$.
For $c\in \mathsf{H}_p(r_0)$, denote by $x(c)\in[0,r_0]$ the smallest $r>0$ for which $c\in \mathsf{H}_p(r)$, i.e., $x(c)$ is the birth time of the class.
Similarly, let $y(c)$ be the smallest $r>0$ for which $c\notin \mathsf{H}_p(r)$, i.e., the death time of the class. 
The set of all birth-death time pairs
$$
    \mathcal{D}_p = \{(x(c), y(c)) : \quad c\in \mathsf{H}_p(r) \text{ for some $r>0$}\} \subset \mathbb{R}^2
$$
is called the $p$-th persistence diagram of $S$. 
Plotting the diagrams $\mathcal{D}_0$ and $\mathcal{D}_1$, as in Figure~\ref{f:pd}, gives topological information on the pattern $S\subset\mathbb{R}^2$.
Figure~\ref{f:pd} shows the $0$- and $1$-persistence diagrams corresponding to a real white Gaussian noise realization, whereas Figure~\ref{f:pd_b} shows the diagrams corresponding to the signal with two components used in Figure~\ref{f:alpha_complex}.
On the one hand, $\mathcal{D}_0$ (red dots in Figure~\ref{f:pd}) shows the birth and death times of connected components, all born at $r=0$, and progressively dying until only one remains indefinitely.
On the other hand, $\mathcal{D}_1$ (blue dots in Figure~\ref{f:pd}) gives a visual diagnostic of the number and size of holes in the pattern $S$.
Points far away from the diagonal indicate long-living holes, which we interpret as being indicative of a signal: in Figure~\ref{f:pd_a} two birth-death pairs are significantly further away from the diagonal than the remaining pairs. 
Such pairs are typically representative of signal components.
In contrast, Figure~\ref{f:pd_b} shows that the pairs from $\mathcal{D}_1$ stay close to the diagonal (notice the difference in the scale between the axes in Figure~\ref{f:pd_a} and Figure~\ref{f:pd_b}.

In Section~\ref{s:test} of the main text, we design a detection test based on the persistence diagram, while in Section~\ref{s:signal_estimation} we investigate the possibility to identify the support of the signal once the test has rejected the hypothesis of pure noise. 
For the latter task, we need one more notion from persistence homology. 

\subsection{Minimum volumes} \label{APP:s:minvol_cycles}

Once an anomalous point $(x(c), y(c))$ has been detected in the persistence diagram $\mathcal{D}_1$, the signal processer would like to find a particular cycle corresponding to $c$, and use it to identify a candidate support for the signal that hopefully led to detection.
Since one can add any boundary without changing $c$, it is intuitively appealing to search for a member of $c$ that encloses the smallest possible portion of $\mathbb{R}^2$, that portion likely corresponding to the signal support.
One way to formalize this minimizing class representer is the notion of minimum-volume cycle, introduced in \cite{Sch15} and generalized in \cite{Oba18}.

While the original definition takes the form of the solution to a constrained optimization problem that has to be numerically approximated, our special interest in the alpha-complex and the case of $S\subset\mathbb{R}^2$ allows us to easily obtain minimum-volume cycles through the \emph{persistence tree} (PT) $G=(\mathcal{V},E)$ of $S$ defined by Algorithm~\ref{a:persistence_tree}; the PT construction was introduced by \cite{Sch15}.
 
\begin{algorithm}[h!]
\caption{Building the persistence tree \cite{Sch15,Oba18}}
\label{a:persistence_tree}

\begin{algorithmic}[1]
\Require {A filtration as an ordered list $\left( (r_1, \sigma_1),\dots, (r_J, \sigma_J) \right)$.}
\Ensure The persistence tree $G=(\mathcal{V},E)$.
\State $\mathcal{V} = \{(r_\infty, \sigma_{\infty})\}$, $E = \emptyset$
\For{$j=J,...,1$}
\If{$\sigma_{j}$ is a 2-simplex}
    \State Add node $(r_j,\sigma_{j})$ to $\mathcal{V}$
\ElsIf{$\sigma_{j}$ is a 1-simplex}
    \State Get $\sigma_{s}$ and $\sigma_{t}$, the two 2-simplices sharing $\sigma_{k}$ as an edge.
    \State $\sigma_{s^{\prime}} = \operatorname{Root}(\sigma_{s},\mathcal{V},E)$
    \State $\sigma_{t^{\prime}} = \operatorname{Root}(\sigma_{t},\mathcal{V},E)$
\EndIf

\If{$s^{\prime} == t^{\prime}$}
\State Continue
\ElsIf{$s^{\prime} > t^{\prime}$}
\State Add edge $( \sigma_{s^{\prime}} \overset{r_j} \to \sigma_{t^{\prime}} )$ to $E$. 
\ElsIf{$s^{\prime} < t^{\prime}$}
\State Add edge $( \sigma_{t^{\prime}} \overset{r_j} \to \sigma_{s^{\prime}} )$ to $E$.
\EndIf
\EndFor
\end{algorithmic}
\end{algorithm}

The input to Algorithm \ref{a:persistence_tree} is the list of simplices ordered by appearence in the filtration $\left( (r_1, \sigma_1),\dots, (r_J, \sigma_J) \right)$.
The set of nodes $\mathcal{V}$ is initialized with  $\sigma_\infty= (\mathbb{R}^{2}\cup\{\infty\}) \setminus \mathcal{P}(S)$, which represent the complement of the simplicial complex in the one-point compactification of $\mathbb{R}^2$ \cite{Oba18}.
Each node/edge also contains extra information in $\mathbb{R}$ by saving the value of $r_{i}$ corresponding to $\sigma_{i}$, with $r_{\infty}=\infty$.
Algorithm \ref{a:persistence_tree} simply moves through the filtration in inverse order, starting from $\sigma_{J}$. 
When $\sigma_j$ is a 2-simplex, it is automatically added to $\mathcal{V}$.

If $\sigma_j$ is a 1-simplex, then is added to the list of edges $E$ as long as it joins two unconnected nodes from $\mathcal{V}$.
The $\operatorname{Root}(\sigma_{s},\mathcal{V},E)$ procedure traverses the tree $(\mathcal{V},E)$ upstream, starting from $\sigma_{s}$, searching for a node without parents, i.e. the root corresponding to $\sigma_{s}.$
Note that the notation $\sigma \overset{r_j} \to \sigma_{k}$ indicates the way the persistence tree should be read, meaning here that $\sigma_{k}$ is a descendant of $\sigma$, and the edge $(r_j,\sigma_{j})$ joining both nodes is a 1-simplex $\sigma_{j}$ appearing at $r=r_j$ in the filtration.
\cite{Oba18} shows that $(b,d)\in \mathcal{D}_1(r)$ if and only if there exists $\sigma \overset{r_b} \to \sigma_{d}$, i.e. an edge with $r=r_b$ pointing at a node $\sigma_d$.
The persistence diagram can thus be read from $G$.

Figure~\ref{f:holes_and_cycles_a} shows a simple example of the simplicial complexes obtained for some values of $r$.
The corresponding persistence tree is depicted in Fig.~\ref{f:holes_and_cycles_b}.
In the tree, the 1-simplex $\sigma_{b}=[0,4]$, corresponding to the birth of a non-trivial cycle at $r=0.097$, points to the 2-simplex $\sigma_{d}=[0,1,4]$, the appearance of which marks the death of this homological class at $r=0.152$.
Formally, a minimum-volume cycle generating the pair $(b,d) \in \mathcal{D}_1$ is defined as $\partial_2 MV(\sigma_d)$, where $MV(\sigma_d)$ is the volume obtained as
\begin{equation}\eqlabelApp{e:min_vols}
    MV(\sigma_d) = \{ \sigma_d\} \cup \operatorname{des}(\sigma_d,E),
\end{equation}
$\operatorname{des}(\sigma_d,E)$ being the descendants of the node $\sigma_d$ in the persistence tree $G$ (dashed blue line in the PT of Fig. \ref{f:holes_and_cycles_b}).
As shown in \cite{Oba18}, $MV(\sigma_d)$ is the volume with the smallest size, measured in number of 2-simplices, for the component associated with a given $(b,d) \in \mathcal{D}_1$.
Hence, we will refer to \eqref{e:min_vols} as a \emph{minimum} volume (MV).

Fig.~\ref{f:holes_and_cycles_b}, the MV corresponding to $\sigma_d=[0,1,4]$ is displayed in blue in the bottom section of the right panel.
This notion corresponds to the intuition conveyed by Figure~\ref{f:holes_and_cycles_a}: at the iteration where the class died, all the volume is then ``filled''.
We can recover the MV by tracking the 2-simplices in it in the reverse order of appearance in the filtration, starting from $\sigma_{d}$ and adding one 2-simplex $\sigma_i$ at the time for $r_i > r_b$.
This is exactly what we do by taking the descendants of $\sigma_d$ from the persistence tree in \eqref{e:min_vols}.

\begin{figure*}
    \centering
    \begin{subfigure}{0.46\textwidth}
    \centering
        \includegraphics[width=\textwidth]{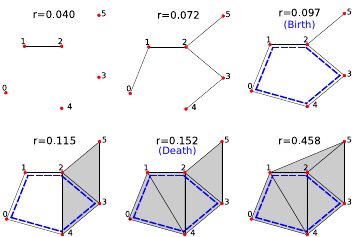}
        \caption{Filtration.}\label{f:holes_and_cycles_a}
    \end{subfigure}
    \begin{subfigure}{0.46\textwidth}
    \centering
        \vspace*{-1.5cm}
        \includegraphics[width=\textwidth]{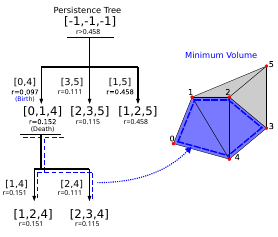}
        \caption{Persistence tree.}\label{f:holes_and_cycles_b}
    \end{subfigure}
    \\
    \interspace
    \begin{subfigure}[b]{0.45\textwidth}
    \includegraphics[width=\textwidth]{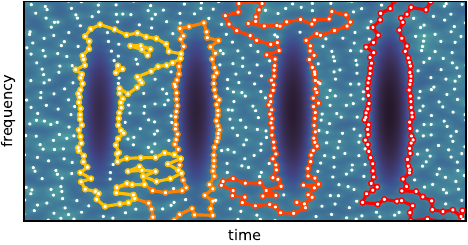} 
    \caption{Impulses. Minimum volumes.} \label{f:stable_volumes_a}
    \end{subfigure}
    \hfill
    \begin{subfigure}[b]{0.45\textwidth}
    \includegraphics[width=\textwidth]{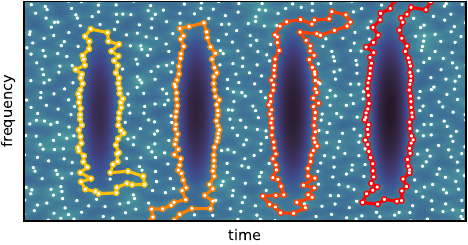}
    \caption{Impulses. Stable volumes.} \label{f:stable_volumes_b}
    \end{subfigure}
    \\
    \vspace*{1cm}
    \begin{subfigure}[b]{0.6\textwidth}
    \centering
    \includegraphics[width=\textwidth]{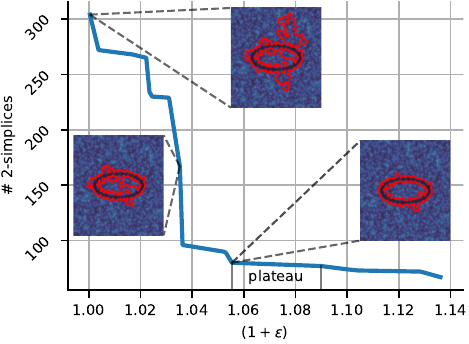}
    \caption{Volume size (in number of 2-simplices) vs. bandwidth parameter $\epsilon$.}
    \label{f:tunning_epsilon}
    \end{subfigure}

    \caption{\textbf{(a)} Alpha-complexes in a filtration, for different values of $r$. The minimum-volume cycle born at $r=0.097$ is shown in blue, dashed line. \textbf{(b)} Persistence tree corresponding to the filtration (top), where $[-1,-1,-1]$ represents $\sigma_{\infty}$, and the minimum volume (bottom, in blue) corresponding to the class born at $r=0.097$. The blue, dashed line in the persistence tree links the 2-simplices comprised in the minimum volume. \textbf{(c)} and \textbf{(d)} show minimum volumes and stable volumes, respectively, for comparison. \textbf{(e)} Shows the volume size of the stable volumes for different values of $\epsilon$.}
    \label{f:holes_and_cycles}
\end{figure*}

Returning now to an example based on the spectrogram, Fig. \ref{f:cycles} illustrates these concepts by superimposing different cycles on top of a spectrogram. 
The continuous red and magenta lines show the minimum-volume cycles corresponding to the significant pairs $(x(c), y(c))$ highlighted in Fig. \ref{f:pd_b} (the color indicates the corresponding pair in the persistence diagram).
Both red lines, continuous and dashed, show two cycles in the same class, i.e. represented by the same pair in the persistence diagram. 
However, the dashed cycle is not a minimum-volume cycle.

Finally, note that if one takes two points in the persistence diagram, nothing prevents the two corresponding minimum volumes to intersect. 
In our particular application of identifying signal components in the time-frequency plane, we prefer restricting ourselves to non-overlapping volumes.
To achieve that, we simply restrict to minimum volumes \eqref{e:min_vols} such that $\sigma_d$ is a child of the root $\sigma_{\infty}$.
By construction, none of these minimum volumes intersect.

\section{Stable volumes} \label{APP:s:stable_vols}
We noticed that a small change in the input signal, like the difference between two noise realizations, can produce large variations of the cycles --and their volumes--, even for the cycles associated to long-living components \cite{cohen2005stability,Ben20,Oba23}.  
Such instability in the presence of noise is a known issue in computational topology, and some approaches to \emph{stabilize} the components of interest have been proposed \cite{Ben20,Oba23}.

Intuitively, one way to robustify the construction of minimum volumes is to take the intersection of minimum volumes corresponding to slightly altered inputs, where the alteration is small enough that we can identify a volume from one signal to the other. Formally, we focus here on the so-called \emph{stable volumes} recently proposed by Obayashi \cite{Oba23}, which can be defined directly from the persistence tree in Sec. \ref{APP:s:minvol_cycles}.

More precisely, given a filtration $\left( (r_1, \sigma_1),\dots, (r_J, \sigma_J) \right)$, a pair $(b,d)\in\mathcal{D}_1$ and its corresponding persistence tree, the stable volume with noise bandwidth $\epsilon\geq 0$ is defined as
\begin{equation} \eqlabelApp{e:stable_vols}
    SV_{\epsilon}(\sigma_b,\sigma_d) = \{ \sigma_d\} \cup \left( \bigcup\limits_{\sigma \in C_{\epsilon}(\sigma_b,\sigma_d)} \operatorname{des}(\sigma,G)  \right),
\end{equation}
where $C_{\epsilon}(\sigma_b,\sigma_d) = \{\sigma \in X^{(2)} \;|\; \sigma_d \xrightarrow{r_k} \sigma \text{ and } r_k \geq (1+\epsilon) \; r_b \}$, $X^{(2)}$ is the set of all 2-simplices.
Obayashi \cite{Oba23} shows that \eqref{e:stable_vols}
is indeed the intersection of minimum volumes corresponding to a certain sophisticated type of perturbation of the filtration, which is a trade-off between actually representing noise in the input and allowing the identification of minimum volumes across perturbations.

By comparing \eqref{e:stable_vols} to \eqref{e:min_vols}, it can be interpreted that in order to get a stable volume, one needs first to prune the branches stemming from the node $\sigma_d$ in $G$ for $r<(1+\epsilon)r_b$ in the filtration.
Then the stable volume is computed by aggregating the remaining descendants of $\sigma_d$, in a similar fashion as the minimum volumes defined in Sec. \ref{APP:s:minvol_cycles}.

Figures \ref{f:stable_volumes_a} and \ref{f:stable_volumes_b} illustrate the difference between minimum and stable volumes, obtained for the same signal.
By comparing both cases, one can see that stable volumes (Fig. \ref{f:stable_volumes_b}) produce a more accurate estimation of the region of interest, in this case the domain corresponding to each of the four impulses in the signal.

The bandwidth parameter $\epsilon$ plays a key role in \emph{stabilizing} a minimum-volume cycle, since $\epsilon=0$ makes \eqref{e:stable_vols} equivalent to \eqref{e:min_vols}. 
Using $\epsilon>1.0$ produces a stable volume the size of which --in number of 2-simplices-- is smaller than the minimum-volume cycle obtained for the same point in the PD.

Figure \ref{f:tunning_epsilon} shows the monotonically decreasing relationship between the size of the volume and $\epsilon$ for a signal with $\text{SNR}=0$ dB, i.e. noise and signal with the same energy.

Theoretically, $\epsilon$ depends on the SNR of the signal \cite{Oba23}. 
However, an explicit relationship between the bandwidth parameter and the level of noise in the signal is not available in this case.
Hence, in practice, $\epsilon$ should be large enough to reduce the instabilities, but not as large that the resulting volume is too small to correctly represent the signal domain of interest.
The next section describes a practical approach to tune $\epsilon$.

\subsection{Tuning the bandwidth parameter} \label{APP:s:tuning_sv}
As suggested by Obayashi in \cite{Oba23}, a possible heuristic to fix the bandwidth parameter consists in counting the number of 2-simplices in the volume obtained for a range of values of $\epsilon$ in order to build a curve like the one shown in Fig. \ref{f:tunning_epsilon}.
A plateau in the curve indicates a stabilization of the volume, hence a value of $\epsilon$ in a plateau of this curve should be selected. 
For each component extracted from the PD, we look for a plateau in an interval $[0, \epsilon_{\max}]$, and the minimum value of $\epsilon$ in the plateau is chosen.
An $\epsilon_{\max}=0.15$ was empirically chosen. 

It might occur that there are not enough points in the interval $[0, \epsilon_{\max}]$ to form a plateau.
When there are less than ten points in the search interval $[0, \epsilon_{\max}]$, the latter is expanded until a plateau can be computed. 
This approach is used to automatically fix $\epsilon$ when stable volumes are computed, as in Fig. \ref{f:stable_volumes_b}.

\subsection{Signal Reconstruction}
Algorithm \ref{alg:sig_est} summarizes the steps for signal estimation based on minimum (or stable) volumes.
\begin{algorithm} [h!]
\caption{Signal reconstruction}
\label{alg:sig_est}
\begin{algorithmic}[1]
\Require Noisy signal $h$, significance level $\alpha$.
\State Compute the STFT $V_{g}(h)$ using \eqref{e:stft}.
\State Find the zeros of $|V_{g}(h)|^2$.
\State Compute the persistence diagram based on the zeros.
\State Estimate the number of significant components to extract from the PD using the tests described in Sec. \ref{s:number_holes} with significance level $\alpha$.
\State Compute the volume of the minimum-volumes cycle (or stable volume) associated with each relevant component.
\State Approximate the signal domain $D$ as the union of the computed volumes.
\State Compute a signal estimation $\tilde{f}$ using \eqref{e:inv_stft} of the main document.
\State \Return $\tilde{f}$.
\end{algorithmic}
\end{algorithm}

The performance of the reconstruction for the synthetic signals was measured using the quality reconstruction factor (QRF, see \cite{meignen2016adaptive}) given by $ \operatorname{QRF} = 10 \log_{10}\left( \Vert f \Vert ^2_{2}/\Vert f-\tilde{f} \Vert^{2}_{2} \right)\; \text{(dB)}$, 
where $f$ is the original noiseless signal and $\tilde{f}$ is a \emph{denoised} approximation.
The QRF can be understood as an estimation of the SNR of the output signal after a reconstruction approach such as Algorithm \ref{alg:sig_est}.
In practice, it should be higher than the SNR of the noisy input signal to indicate an improvement.
Notice that one can only compute the QRF if the noiseless reference is available, hence the importance of working with synthetic signals at an early stage.

\section{Hypothesis testing} \label{APP:s:hyp_test}
This section complements the contents in the main document, providing more technical details regarding the statistical tools we use for signal detection and denoising.

\subsection{Simple hypothesis} \label{APP:s:hyp_test:simple}
Given a noise model $\xi$ and a candidate signal $f$, signal detection can be stated as a binary hypothesis test \cite{flandrin1998time,BaFlCh18}:
\begin{equation} \eqlabelApp{APP:e:test}
 \left\lbrace \begin{array}{l}
               H_{0}: h = \xi \;,\text{ i.e. the observation is pure noise} \\
               H_{1}: h = f + \xi\;,\text{ i.e. the observation is a mixture of signal plus noise} 
              \end{array}\right. .
\end{equation}
Formally, both $H_0$ and $H_1$ here can be identified with singletons containing one probability measure each. 
Alternately, if no candidate signal $f$ is known, one can define $H_1$ as the complement $H_0^C$ of $H_1$.
A test aiming to detect a signal can be characterized by its
\begin{enumerate}
    \item Specificity: the probability $1-\alpha$ of correctly keeping $H_0$, where $\alpha$ is the so-called \emph{significance} of the test, i.e. the probability of wrongfully rejecting $H_0$.
    \item Sensitivity: the probability of correctly rejecting $H_0$, also known as the \emph{statistical power of the test}.
\end{enumerate}

The main idea behind our detection tests is computing the top order statistics $\Xi_{(\ell)},\; \ell=1,2,\dots,$ for a generic test statistic $\Xi$. In our case, $\Xi$ could be the distance to the diagonal in the PD, or the \emph{energy} associated to the minimum (or stable) volume of that component in the spectrogram.
One should be able to detect the presence of a signal by comparing the quantiles of the top order statistics against the same quantiles under the null model (e.g., white Gaussian noise). For practical purposes, quantiles are obtained by Monte Carlo simulations. This procedure is explained below.

From the underlying noise model, we generate $B$ independent samples. 
Let $\Xi_{(1), 1}, \ldots, \Xi_{(1), B}$ denote the corresponding top order statistics. 
Then a $p$-value for testing $H_0$ can simply be obtained as 
\[\frac{1}{B} \sum_{j = 1}^B \mathbb{I}(\Xi_{(1), j} > \Xi_{(1), \mathrm{obs}}),\]
where $\Xi_{(1),\mathrm{obs}}$ is the top order statistics obtained from the sample to be tested. 
A signal is then \emph{detected} if the $p$-value is below a predefined value of $\alpha\in(0,1)$.
By a symmetry argument, if $B\alpha$ is an integer, the significance of the test is guaranteed to be at least $\alpha$.

\subsection{Multiple hypothesis testing} 
\label{APP:s:hyp_test:mult}
Our applications on signal detection and reconstruction require inferring the existence of one or more signal components.
This is addressed by using multiple hypothesis testing, conducting several tests simultaneously (for signal detection) or sequentially (for estimating the number of signal components).

\subsubsection{Family-wise error rate}
A particularly important aspect of multiple testing is to construct procedures capable of controlling a chosen type I error-related criterion.
A traditional choice is the Family-wise Error Rate (FWER), which is the worst-case probability of incorrectly rejecting any null-hypothesis.
Formally, this can be defined as follows; see e.g. \cite{fromont2016family}.
Let $(\mathbb{X},\mathcal{X})$ be a mesurable space, and $\mathcal{P}$ the set of probability measures on $\mathbb{X}$.
A hypothesis $H$ can be seen as a subset of $\mathcal{P}$, and $H$ is true under $P$ if $P\in H$, and false if $P\notin H$.
Denote the set of true hypotheses under $P$ as $\mathcal{T}(P)=\{H\in\mathcal{H}: P\in H\}$.
Then, writing $\mathcal{R}$ for the (observation-dependent) subset of $\mathcal{P}$ that is rejected,
the FWER is defined as \[ \text{FWER}(\mathcal{R}) = \sup_{P\in\mathcal{P}} P \left(\mathcal{R}\cap \mathcal{T}(P) \neq \emptyset\right). \]

\subsubsection{Simultaneous tests} \label{APP:s:hyp_test:mult:simul}
With the notations of Section~\ref{APP:s:hyp_test:simple}, we may want to base our decision on several top order statistics $\Xi_{(k)}$ instead of a single one. 
This leads to multiple hypothesis testing, which we now describe more formally.
Start with a suitable choice of $K$ most persistent components to evaluate, i.e. the components associated with the $K$ longest distances to the diagonal in the PD.
Denote by $\mu_k^{(0)} $, $1\le k\le K$, the marginal distribution of $\Xi_{(k)}$ under our null hypothesis $H_0$ that the original signal is pure white Gaussian noise.
Then, we conduct $K$ tests simultaneously for the $K$ hypotheses
\[
    H_{0,k} : \Xi_{(k)} \sim \mu_k^{(0)}, \quad k=1, \dots, K.
\]
Formally, $H_{0,k}$ is the set of distributions over signals under which the marginal of $\Xi_{(k)}$ is $\mu_k^{(0)}$.

Again, from the $B$-many simulated persistence diagrams, with statistics $\Xi_{(k), j}, 1 \le j \le B$, we compute a series of $p$-values $p_1, p_2, \dots, p_{K}$ as \[p_k := \frac{1}{B}\sum_{j=1}^B \mathbb{I}(\Xi_{(k),j} > \Xi_{(k), \mathrm{obs}}).\]
For each $H_{0,k}$, we perform a level-$\alpha_k$ test, for some $\alpha_k\in (0,1)$. 
We define the hypothesis of pure noise to be rejected if and only if $H_{0,k}$ is rejected for at least one $1\le k\le K$. 
For any distribution $P$ under which data and the $B$ simulations are i.i.d., 
\begin{equation}
    \eqlabelApp{e:almost_FWER}
    P(H_{0,k}\text{ is rejected for at least one $k$}) = P\big(\cup_{k=1}^K \{p_k < \alpha_k \}\big) < \sum_{k=1}^K \alpha_k.
\end{equation}
Choosing 
\begin{equation}
    \eqlabelApp{e:bonferroni}
    \alpha_k = \alpha/K
\end{equation}
for some predefined $\alpha\in(0,1)$, \eqref{e:almost_FWER} and thus the FWER are bounded by $\alpha$. 
The choice of \eqref{e:bonferroni} is called Bonferroni's correction, and is often the default choice in practice among so-called \emph{$\alpha$-corrections}.

As an aside, note that multiple hypothesis testing can be used to estimate the number of holes, e.g. using as estimator the cardinality of $\{1\le k\le K:p_k < \alpha_k\}$. 
However, we found this estimator not to be robust to the presence of spurious holes in noise, and rather recommend sequential testing.

\subsubsection{Sequential tests} \label{APP:s:hyp_test:mult:seq}

In Section~\ref{APP:s:hyp_test:mult:simul}, we claimed a rejection when at least one --not necessarily the first-- of the hypotheses $H_{0,1}, \dots, H_{0,K}$ was violated.
Instead, we could check whether $p_1 < \alpha_1$, and if not then check whether $p_2 < \alpha_2$, and continue so until either $H_{0,k}$ is rejected or we reach the final $K$th test and do not reject. 
This leads to another rejection region, i.e. another test, which we call a \emph{sequential test}.
This also naturally suggests another estimator for the number of holes.
Indeed, intuitively, rejecting $H_{0,k}$ for $k =1, \dots, k_{\star}$ is an indication of the presence of at least $k_{\star}$ holes in the configuration of zeros. 
Formally, we set
$$
    \hat{n}_{\mathrm{holes}} = \min \{0 \le k \le K-1 : p_{k+1} \geq \alpha_{k+1}\},
$$
with the convention that $\min\emptyset = 0$.
The power of the sequential test, as well as the distribution of the estimator $\hat{n}_{\mathrm{holes}}$, depend on the choice of $(\alpha_k)$.

\subsubsection{Choice of levels of significance in sequential tests} \label{APP:s:hyp_test:mult:seq:alpha_levels}

For any distribution $P$ under which data and the $B$ simulations are i.i.d., the test wrongly detects if and only if $\hat{n}_{\text{holes}}>0$, so that
$$
    P(\text{false detection}) = P(\hat{n}_{\text{holes}}>0) = P(p_1\ge\alpha_1)=\alpha_1.
$$
So, as long as $\alpha_1\leq \alpha$, any 
$\alpha$-correction controls the FWER of the sequential test.
We consider three different choices of $\alpha$-corrections $(\alpha_k)_{1\leq k\leq K}$ in the main paper: the Bonferroni correction we used in simultaneous testing \eqref{e:bonferroni}, as well as 
\begin{enumerate}
    \item Polynomial decay: $\alpha_k = \frac{\alpha}{k^{m}},\quad m \ge 0$.
    \item Geometric decay: $\alpha_k = \alpha \beta^{k}, \quad \beta \in (0, 1)$.
\end{enumerate}
Note that for the sequential test, the Bonferroni correction is overly conservative.

Finally, on top of controlling the FWER, we note that fixing an $\alpha$-correction allows to further elaborate on the distribution of the estimator $\hat{n}_{\text{holes}}$. 
For instance, let $k^\star>0$ and assume that $P\in H_{0,1}^C\cap H_{0,2}^C\cap\dots\cap H_{0,k^*}^C\cap H_{0,k^*+1}$.
Intuitively, this translates to the fact that $k^*$ is the underlying number of holes in the signal. 
Let us compute the probability under such a $P$ of misestimating $k^\star$.

\begin{proposition}
    Assume that $P\in H_{0,1}^C\cap H_{0,2}^C\cap\dots\cap H_{0,k^*}^C\cap H_{0,k^*+1}$. Let
    \[
        \tilde{\alpha}_l := P(p_l< \alpha_l\,|\,p_j<\alpha_j \text{ for }j < l)
    \]
    and
    \[
        \epsilon_l := 1 - \tilde{\alpha}_l =  P(p_l\ge \alpha_l\,|\,p_j<\alpha_j \text{ for } j < l).
    \]
    Then
    \begin{align} \eqlabelApp{APP:e:underestim}
        P(\hat{n}_{\text{holes}}<k^*)=\sum_{j=0}^{k^*-1}\bigg(\epsilon_{j+1}\prod_{l=1}^j(1-\epsilon_l)\bigg)
    \end{align}
    and
    \begin{align} \eqlabelApp{APP:e:overestim}
        P(\hat{n}_{\text{holes}} > k^*)=\bigg(\prod_{l=1}^{k^*}(1-\epsilon_l)\bigg)\cdot\sum_{j >k^*}\bigg((1-\tilde{\alpha}_{j+1}) \prod_{l=k^*+1}^j\tilde{\alpha}_l\bigg).
    \end{align}
\end{proposition}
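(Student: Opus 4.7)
The plan is to obtain the joint distribution of $\hat{n}_{\text{holes}}$ by computing $P(\hat{n}_{\text{holes}} = j)$ for each $j \ge 0$ from first principles, and then summing over the ranges corresponding to underestimation ($j < k^*$) and overestimation ($j > k^*$).

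First I would note that, by the definition $\hat{n}_{\text{holes}} = \min\{k \ge 0 : p_{k+1} \ge \alpha_{k+1}\}$, the event $\{\hat{n}_{\text{holes}} = j\}$ is exactly $\{p_1 < \alpha_1, \dots, p_j < \alpha_j, p_{j+1} \ge \alpha_{j+1}\}$. I would then apply the chain rule for conditional probabilities to write
\[
    P(\hat{n}_{\text{holes}} = j) = \bigg(\prod_{l=1}^{j} P(p_l < \alpha_l \mid p_m < \alpha_m,\, m < l)\bigg) \cdot P(p_{j+1} \ge \alpha_{j+1} \mid p_m < \alpha_m,\, m \le j).
\]
Recognizing the factors as $\tilde{\alpha}_l = 1 - \epsilon_l$ and $\epsilon_{j+1}$ respectively gives the clean formula
\[
    P(\hat{n}_{\text{holes}} = j) = \epsilon_{j+1} \prod_{l=1}^{j}(1-\epsilon_l).
\]
(For $j = 0$ the empty product convention gives $P(\hat{n}_{\text{holes}} = 0) = \epsilon_1$, as one would expect.)

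Next, summing over $j \in \{0, 1, \dots, k^*-1\}$ immediately yields \eqref{APP:e:underestim}. For \eqref{APP:e:overestim} I would split the product by writing $\prod_{l=1}^j (1-\epsilon_l) = \prod_{l=1}^{k^*}(1-\epsilon_l) \cdot \prod_{l=k^*+1}^{j}(1-\epsilon_l)$ for any $j > k^*$, factor the common term $\prod_{l=1}^{k^*}(1-\epsilon_l)$ out of the sum, and re-express $(1-\epsilon_l) = \tilde{\alpha}_l$ and $\epsilon_{j+1} = 1 - \tilde{\alpha}_{j+1}$ in the inner summand to recover the displayed expression.

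I do not anticipate a serious obstacle, since once the event $\{\hat{n}_{\text{holes}} = j\}$ is rewritten as the intersection above the computation is a clean telescoping of conditional probabilities. The only subtlety worth flagging is that the conditional probabilities $\tilde{\alpha}_l$ are only defined when $P(p_m < \alpha_m,\, m < l) > 0$; any $l$ where this base probability vanishes contributes a zero factor to all relevant products, so the formulas extend by continuity and the final expressions are unambiguous. Note also that the assumption $P \in H_{0,1}^C \cap \dots \cap H_{0,k^*}^C \cap H_{0,k^*+1}$ is not actually used in the derivation itself — it only fixes the interpretation of $k^*$ as the true number of holes, so that \eqref{APP:e:underestim} and \eqref{APP:e:overestim} measure the probabilities of underestimation and overestimation respectively.
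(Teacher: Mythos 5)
Your proposal is correct and follows essentially the same route as the paper: identify $\{\hat{n}_{\text{holes}}=j\}$ with the event $\{p_1<\alpha_1,\dots,p_j<\alpha_j,\,p_{j+1}\ge\alpha_{j+1}\}$, factor its probability via the chain rule into $\epsilon_{j+1}\prod_{l=1}^j(1-\epsilon_l)$, and sum over $j<k^*$ and $j>k^*$ (splitting the product at $k^*$ and rewriting $1-\epsilon_l=\tilde{\alpha}_l$ in the latter case). Your remarks on the degenerate conditioning events and on the role of the assumption on $P$ are reasonable additions but do not change the argument.
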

\begin{proof}
For $j<k^*$, 
\begin{align*}
    P(\hat{n}_{\text{holes}}=j) = P(p_1<\alpha_1, p_2<\alpha_2,\dots, p_{j+1}\ge \alpha_{j+1})=\epsilon_{j+1}\prod_{l=1}^j(1-\epsilon_l),
\end{align*}
and hence
\[
    P(\hat{n}_{\text{holes}}<k^*)=\sum_{j=0}^{k^*-1}\bigg(\epsilon_{j+1}\prod_{l=1}^j(1-\epsilon_l)\bigg).
\]
On the other hand, for $j>k^*$, 
\begin{align}
P(\hat{n}_{\text{holes}}=j) &=  P(p_1<\alpha_1, p_2<\alpha_2,\dots, p_{j+1}\ge \alpha_{j+1}) \\
&= \bigg(\prod_{l=1}^{k^*}(1-\epsilon_l)\bigg)\cdot(1-\tilde{\alpha}_{j+1}) \prod_{l=k^*+1}^j\tilde{\alpha}_l. \nonumber
\end{align} 
Therefore
\[
    P(\hat{n}_{\text{holes}} > k^*)=\bigg(\prod_{l=1}^{k^*}(1-\epsilon_l)\bigg)\cdot\sum_{j >k^*}\bigg((1-\tilde{\alpha}_{j+1}) \prod_{l=k^*+1}^j\tilde{\alpha}_l\bigg).
\]
This completes the proof.
\end{proof}
For a good test, we will have $\epsilon_l$ small for all $l \le k^*$ and $\tilde{\alpha}_l$ controlled for all $l > k^*$. As such, it is clear that the probability of underestimating $k^*$ would be small as long as the tests we use have good power. We now look at the probability of overestimating $k^*$ more carefully.
\begin{lemma}
    Suppose that $\alpha_0 \ge \alpha_1 \ge \alpha_2 \ge \cdots$ and that for any $l > k^*$, we have $\tilde{\alpha}_l \le \alpha_l (1 + \delta)$ for some $\delta \ge 0$ such that $\alpha_0 (1 + \delta) < 1$. Then
    \[
        P(\hat{n}_{\text{holes}} > k^*) \le \frac{\alpha_{k^* + 1} \big(\prod_{l=1}^{k^*}(1-\epsilon_l)\big)(1 + \delta)}{1 - \alpha_0(1 + \delta)}.
    \] 
\end{lemma}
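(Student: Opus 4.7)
The plan is to start from the exact expression for $P(\hat{n}_{\text{holes}} > k^*)$ derived in the preceding proposition and then bound the resulting sum by applying the hypotheses on $\tilde\alpha_l$ and the monotonicity of $(\alpha_l)$, reducing everything to a geometric series. Recall that
\[
P(\hat{n}_{\text{holes}}>k^*) = \Bigl(\prod_{l=1}^{k^*}(1-\epsilon_l)\Bigr)\cdot\sum_{j>k^*}(1-\tilde\alpha_{j+1})\prod_{l=k^*+1}^j\tilde\alpha_l.
\]
Since the prefactor $\prod_{l=1}^{k^*}(1-\epsilon_l)$ already appears in the target bound, the task reduces to showing that the remaining series is at most $\alpha_{k^*+1}(1+\delta)/\bigl(1-\alpha_0(1+\delta)\bigr)$.

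The first step is to drop the harmless factor $(1-\tilde\alpha_{j+1})\le 1$. The second step is to split the product $\prod_{l=k^*+1}^{j}\tilde\alpha_l$ at its leading index: for $l=k^*+1$ the hypothesis gives $\tilde\alpha_{k^*+1}\le \alpha_{k^*+1}(1+\delta)$, while for every $l\ge k^*+2$ the hypothesis combined with the monotonicity $\alpha_0\ge\alpha_1\ge\alpha_2\ge\cdots$ yields $\tilde\alpha_l \le \alpha_l(1+\delta)\le \alpha_0(1+\delta)$. Consequently,
\[
\prod_{l=k^*+1}^{j}\tilde\alpha_l \;\le\; \alpha_{k^*+1}(1+\delta)\,\bigl[\alpha_0(1+\delta)\bigr]^{\,j-k^*-1}.
\]
Summing over $j>k^*$ and using the assumption $\alpha_0(1+\delta)<1$, the geometric series evaluates to $1/\bigl(1-\alpha_0(1+\delta)\bigr)$, so that
\[
\sum_{j>k^*}(1-\tilde\alpha_{j+1})\prod_{l=k^*+1}^{j}\tilde\alpha_l \;\le\; \frac{\alpha_{k^*+1}(1+\delta)}{1-\alpha_0(1+\delta)}.
\]
Multiplying by $\prod_{l=1}^{k^*}(1-\epsilon_l)$ yields the claimed inequality.

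I do not anticipate any real obstacle: the proof is essentially a bookkeeping exercise that plugs the two hypotheses into the identity from the preceding proposition. The only point requiring care is the bookkeeping of the two distinct bounds on $\tilde\alpha_l$, namely factoring out $\tilde\alpha_{k^*+1}$ separately before replacing it by $\alpha_{k^*+1}(1+\delta)$, so that the numerator of the final bound carries $\alpha_{k^*+1}$ rather than the cruder $\alpha_0$. The condition $\alpha_0(1+\delta)<1$ is used exactly once, to guarantee convergence of the geometric series.
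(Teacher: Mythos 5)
Your proof is correct and follows essentially the same route as the paper's: start from the exact expression for $P(\hat{n}_{\text{holes}} > k^*)$ in the preceding proposition, drop the factor $(1-\tilde{\alpha}_{j+1})\le 1$, bound the conditional levels by the hypotheses, and sum a geometric series. The only (immaterial) difference is that the paper bounds every factor $\alpha_l(1+\delta)$, $l>k^*$, by $\alpha_{k^*+1}(1+\delta)$ via monotonicity, obtaining the slightly sharper denominator $1-\alpha_{k^*+1}(1+\delta)$ before loosening to $1-\alpha_0(1+\delta)$, whereas you bound the factors beyond the leading one by $\alpha_0(1+\delta)$ and land on the stated bound directly.
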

\begin{proof}
We have
\begin{align*}
    \sum_{j >k^*} \bigg((1-\tilde{\alpha}_{j+1}) \prod_{l=k^*+1}^j\tilde{\alpha}_l\bigg) &\le \sum_{j >k^*} \prod_{l=k^*+1}^j\alpha_l(1 + \delta) \\
    &= \sum_{j >k^*} (\alpha_{k^* + 1} (1 + \delta))^{j - k^*} \\
    &= \frac{\alpha_{k^* + 1}(1 + \delta)}{1 - \alpha_{k^* + 1}(1 + \delta)},
\end{align*}
whence the desired bound follows.
\end{proof}

Let us now assume that for $l > k^*$,
\begin{equation}\label{eq:conditional-level-assmumption}
    \alpha_l (1 - \delta) \le \tilde{\alpha}_l \le \alpha_l (1 + \delta)
\end{equation}
for some $0 \le \delta \le \delta_0$, where $\delta_0 < 1$ is such that $\alpha_0 (1 + \delta_0) < 1$. If the $p$-values are independent, then one may take $\delta = 0$. If the degree of dependence among the $p$-values is small, then a small choice of $\delta$ would work. Assuming \eqref{eq:conditional-level-assmumption}, we now give explicit formulas for the overestimation probability $P(\hat{n}_{\text{holes}} > k^*)$.

\begin{lemma}
    Suppose that $\alpha_0 \ge \alpha_1 \ge \alpha_2 \ge \cdots$ and that \eqref{eq:conditional-level-assmumption} holds. Then
    \[
        P(\hat{n}_{\text{holes}} > k^*) = \alpha_{k^* + 1} \bigg(\prod_{l=1}^{k^*}(1-\epsilon_l)\bigg)(1 + O(\delta)),
    \]
    where the implicit constant in $O(\cdot)$ depends on $\alpha_0$ and $\delta_0$.
\end{lemma}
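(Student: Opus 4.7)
The plan is to recognise a telescoping structure in the sum appearing in the preceding Proposition's exact formula for $P(\hat{n}_{\text{holes}} > k^*)$, evaluate it in closed form, and then extract the asserted asymptotics from the two-sided hypothesis on $\tilde{\alpha}_l$.

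Concretely, the Proposition furnishes
\[
P(\hat{n}_{\text{holes}} > k^*) \;=\; \left(\prod_{l=1}^{k^*}(1-\epsilon_l)\right) S, \qquad S \;:=\; \sum_{j > k^*}(1 - \tilde{\alpha}_{j+1}) \prod_{l=k^*+1}^j \tilde{\alpha}_l.
\]
Setting $P_j := \prod_{l=k^*+1}^j \tilde{\alpha}_l$ for $j \ge k^*+1$, I would observe that each summand factors as $(1 - \tilde{\alpha}_{j+1})\, P_j = P_j - P_{j+1}$, so the series telescopes. The monotonicity $\alpha_l \le \alpha_0$ combined with $\alpha_0(1+\delta_0) < 1$ forces $\tilde{\alpha}_l \le \alpha_0(1+\delta_0) < 1$ for every $l > k^*$, so $P_j \to 0$ geometrically, and
\[
S \;=\; \sum_{j>k^*}(P_j - P_{j+1}) \;=\; P_{k^*+1} \;=\; \tilde{\alpha}_{k^*+1}.
\]

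Substituting back into the Proposition's formula produces the clean identity $P(\hat{n}_{\text{holes}} > k^*) = \tilde{\alpha}_{k^*+1}\prod_{l=1}^{k^*}(1-\epsilon_l)$, and the assumed two-sided bound $\alpha_{k^*+1}(1-\delta) \le \tilde{\alpha}_{k^*+1} \le \alpha_{k^*+1}(1+\delta)$ directly yields $\tilde{\alpha}_{k^*+1} = \alpha_{k^*+1}(1 + O(\delta))$ with an absolute hidden constant. Combining these two observations concludes the proof.

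The sole delicate point, which I expect to be the main (though modest) obstacle, is whether the sum is genuinely infinite or truncated at the total number of tests $K$. In the infinite interpretation the telescoping identity above is exact. In the truncated case a residual $P_K \le (\alpha_0(1+\delta_0))^{K-k^*}$ appears, which is exponentially small in $K - k^*$ and is readily absorbed into the $O(\delta)$ factor with a constant depending on $\alpha_0$ and $\delta_0$; no new ideas are required beyond this bookkeeping.
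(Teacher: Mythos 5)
Your argument is correct, and it takes a genuinely different (and in fact sharper) route than the paper. The paper first substitutes the two-sided bounds $\alpha_l(1-\delta)\le\tilde{\alpha}_l\le\alpha_l(1+\delta)$ into the sum from the Proposition and only then telescopes, which destroys exactness and forces it to control a leftover term $T_2$ by a geometric series; this is why its implicit constant depends on $\alpha_0$ and $\delta_0$. You instead telescope the exact sum first: with $P_j=\prod_{l=k^*+1}^j\tilde{\alpha}_l$ each summand is $P_j-P_{j+1}$, and since $\tilde{\alpha}_l\le\alpha_0(1+\delta_0)<1$ forces $P_j\to 0$, you get the clean identity $P(\hat{n}_{\text{holes}}>k^*)=\tilde{\alpha}_{k^*+1}\prod_{l=1}^{k^*}(1-\epsilon_l)$ — which is also exactly what the probabilistic meaning of the event $\{\hat{n}_{\text{holes}}>k^*\}=\{p_1<\alpha_1,\dots,p_{k^*+1}<\alpha_{k^*+1}\}$ predicts — and then a single application of \eqref{eq:conditional-level-assmumption} at index $k^*+1$ gives the conclusion with an absolute constant (indeed constant $1$), strictly improving the paper's statement; the monotonicity hypothesis is used only to guarantee $\tilde{\alpha}_l$ stays bounded away from $1$. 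One small quibble: your remark that in a truncated ($K$-test) setting the residual $P_K$ can be ``absorbed into the $O(\delta)$ factor'' is not literally right, since a term exponentially small in $K-k^*$ need not be $O(\delta)$ (e.g.\ when the $p$-values are independent and $\delta=0$); but this is moot here because the Proposition and Lemma are stated for the untruncated sum, where your telescoping is exact.
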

\begin{proof}
We have
\begin{align*}
    \sum_{j >k^*}&\bigg((1-\tilde{\alpha}_{j+1}) \prod_{l=k^*+1}^j\tilde{\alpha}_l\bigg) \\
    &\le \sum_{j >k^*}\bigg((1-\alpha_{j+1}(1 - \delta)) \prod_{l=k^*+1}^j\alpha_l(1 + \delta)\bigg) \\
    &= \sum_{j >k^*}\bigg((1-\alpha_{j+1}(1 + \delta)) \prod_{l=k^*+1}^j\alpha_l(1 + \delta)\bigg) + \frac{2 \delta}{1 + \delta} \sum_{j > k^*}\prod_{l=k^*+1}^{j + 1}\alpha_l(1 + \delta) \\
    &=: T_1 + T_2.
\end{align*}
Note that $T_1$ is a telescopic sum:
\[
    T_1 = \sum_{j>k^*}\bigg(\prod_{l = k^* + 1}^j \alpha_l (1 + \delta) - \prod_{l = k^* + 1}^{j + 1} \alpha_l (1 + \delta)\bigg) = \alpha_{k^* + 1} (1 + \delta).
\]
On the other hand,
\begin{align*}
    T_2 \le \frac{2 \delta}{1 + \delta} \sum_{j > k^*}(\alpha_{k^* + 1} (1 + \delta))^{j - k^* + 1} = \frac{2 \delta}{1 + \delta} \cdot \frac{(\alpha_{k^* + 1} (1 + \delta))^2}{1 - \alpha_{k^* + 1} (1 + \delta)}.
\end{align*}
We have thus shown that
\begin{align*}
    \sum_{j >k^*}\bigg((1-\tilde{\alpha}_{j+1}) \prod_{l=k^*+1}^j\tilde{\alpha}_l\bigg) &\le \alpha_{k^* + 1}(1 + \delta) \bigg[1 + \frac{2 \delta}{1 + \delta} \cdot \frac{\alpha_{k^* + 1} (1 + \delta)}{1 - \alpha_{k^* + 1} (1 + \delta)}\bigg] \\
    &= \alpha_{k^* + 1}(1 + \delta) \bigg[1 + \frac{{2\alpha_0(1 + \delta_0)}}{1 - \alpha_0 (1 + \delta_0)}\delta\bigg] \\
    &= \alpha_{k^* + 1}(1 + O(\delta)).
\end{align*}
Starting with the lower bound
\begin{align*}
    \sum_{j >k^*}&\bigg((1-\tilde{\alpha}_{j+1}) \prod_{l=k^*+1}^j\tilde{\alpha}_l\bigg) \\
    &\ge \sum_{j >k^*}\bigg((1-\alpha_{j+1}(1 - \delta)) \prod_{l=k^*+1}^j\alpha_l(1 - \delta)\bigg) - \frac{2 \delta}{1 - \delta} \sum_{j > k^*}\prod_{l=k^*+1}^{j + 1}\alpha_l(1 - \delta),
\end{align*}
and proceeding similarly, one may prove that
\begin{align*}
    \sum_{j >k^*}\bigg((1-\tilde{\alpha}_{j+1}) \prod_{l=k^*+1}^j\tilde{\alpha}_l\bigg) &\ge \alpha_{k^* + 1} (1 - \delta) \bigg[1 - \frac{2 \delta}{1 - \delta} \cdot \frac{\alpha_{k^* + 1} (1 - \delta)}{1 - \alpha_{k^* + 1} (1 - \delta)}\bigg] \\
    &\ge \alpha_{k^* + 1} (1 - \delta) \bigg[1 - \frac{2\alpha_0}{(1 - \delta_0) (1 - \alpha_0)} \delta\bigg] \\
    &=\alpha_{k^* + 1}(1 - O(\delta)).
\end{align*}
This completes the proof.
\end{proof}

\begin{corollary}
Under the polynomial decay $\alpha$-correction, i.e. with $\alpha_k=\frac{\alpha}{k^m}$, $m\ge 0$, assuming that \eqref{eq:conditional-level-assmumption} holds, we have
\begin{align*}
    P(\hat{n}_{\text{holes}} > k^*) &= \frac{\alpha}{(k^*+1)^m}\bigg(\prod_{l=1}^{k^*}(1-\epsilon_l)\bigg)(1 + O(\delta)).
\end{align*}
Similarly, under the geometric decay $\alpha$-correction, i.e. with $\alpha_k=\alpha\beta^{k}\,,\,\beta\in(0,1)$, assuming that \eqref{eq:conditional-level-assmumption} holds, we have
\begin{align*}
    P(\hat{n}_{\text{holes}} > k^*) &= \alpha \beta^{k^*+1}\bigg(\prod_{l=1}^{k^*}(1-\epsilon_l)\bigg)(1 + O(\delta)).
\end{align*}
\end{corollary}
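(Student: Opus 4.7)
The plan is to obtain both formulas as immediate consequences of the preceding lemma, which already delivers the general asymptotic expression
\[
P(\hat{n}_{\text{holes}} > k^*) = \alpha_{k^*+1}\bigg(\prod_{l=1}^{k^*}(1-\epsilon_l)\bigg)(1 + O(\delta))
\]
whenever the $\alpha$-correction is non-increasing and \eqref{eq:conditional-level-assmumption} holds. The corollary is then a matter of substituting the explicit form of $\alpha_{k^*+1}$ for the polynomial and geometric decay profiles respectively.

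First I would verify the monotonicity hypothesis $\alpha_1 \ge \alpha_2 \ge \alpha_3 \ge \cdots$ of the preceding lemma in each case. For the polynomial decay $\alpha_k = \alpha/k^m$ with $m \ge 0$, the sequence is clearly non-increasing in $k$. For the geometric decay $\alpha_k = \alpha\beta^k$ with $\beta \in (0,1)$, the sequence is strictly decreasing. In both cases the hypothesis of \eqref{eq:conditional-level-assmumption} is exactly the standing assumption of the corollary, so the preceding lemma applies verbatim. Plugging $\alpha_{k^*+1} = \alpha/(k^*+1)^m$ in the polynomial case and $\alpha_{k^*+1} = \alpha\beta^{k^*+1}$ in the geometric case gives the two stated expressions.

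There is essentially no obstacle: the corollary is a direct application of the lemma by substitution. The only mild subtlety worth mentioning is that the implicit constant in the $O(\delta)$ term is inherited from the preceding lemma and depends only on $\alpha_0$ (effectively $\alpha_1 = \alpha$, or $\alpha\beta$ in the geometric case) and on the uniform bound $\delta_0$; in particular it does not depend on $k^*$, so the asymptotic is uniform in $k^*$ as $\delta \to 0$. No further calculation is needed beyond recognising the two specified $\alpha$-corrections as special instances covered by the general lemma.
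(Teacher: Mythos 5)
Your proposal is correct and matches the paper's (implicit) argument: the corollary is stated there without proof precisely because it is a direct substitution of $\alpha_{k^*+1}=\alpha/(k^*+1)^m$ or $\alpha_{k^*+1}=\alpha\beta^{k^*+1}$ into the preceding lemma, whose monotonicity hypothesis you correctly verify for both decay profiles. Your remark that the $O(\delta)$ constant depends only on $\alpha_0$ and $\delta_0$, hence is uniform in $k^*$, is also consistent with the lemma as stated.
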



In either $\alpha$-correction method, the probability of inaccurately estimating the number of holes depends on the $\epsilon_j$'s and $\delta$, which in turn depend on the choice of test statistics $\Xi_{(j)}$ and their distributions $\mu_j^{(0)}$. 
A good choice of test statistics $\Xi_{(j)}$ would be able to distinguish between noise and true signal, ensuring $\epsilon_j$ remains small for all $j \le k^*$. 
Ideally, if the dependence among the order statistics $\Xi_{(j)}$ (for $1\le k\le K$) is negligible, then $\delta$ will also be negligible.
Coupled with a suitable choice of $\alpha, m$ and $\beta$, both polynomial decay and geometric decay $\alpha$-corrections will estimate the exact number of holes correctly with a large probability.

\section{Detection based on the accumulated persistence function} \label{APP:APF}
A common variant of the persistence diagram is the \emph{rotated-rescaled persistence diagram} (RRPD).
A birth-death pair $(b_i, d_i)$ appears in the RRPD diagram as coordinates $(m_i,l_i)$, where $m_i=(b_i+d_i)/2$ and $l_i = d_i - b_i$, i.e. the half-life and total life of a component.
The Accumulated Persistence Function (APF) \cite{biscio2019accumulated} is defined as a function of the half-life 
\begin{equation}\label{e:apf}
    \APF :m \mapsto \sum\limits_{i} l_{i} \mathbb{I}(m_i\leq m),
\end{equation}
where we assume that the RRPD is computed from $\mathcal{D}_{0}$ or $\mathcal{D}_{1}$, $m\in[0,m_{\max}]$. 
Note that $\APF$
is a monotonically increasing function.
Following \cite{baddeley2014tests, myllymaki2017global}, a test can be devised by first choosing a significance level $\alpha$, simulating $m$ signals coming from the null distribution, computing
\begin{equation}
    t_{\ell} = \max\limits_{m\in[0,m_{\max}]}\left(\APF_{\ell}(m) - \overline{\APF}(m) \right), \quad \ell=0,\dots,L,
\end{equation}
where $\APF_0$ is the APF obtained from the observed data, $\APF_\ell$ the APF obtained from the $\ell$th simulated data ($1\leq \ell\leq L)$, and $\overline{\APF}$ is the average of all $\APF_{\ell}$, $\ell=0,\dots,L$.
Finally, the latter are then sorted in ascending order $t_{(\ell)},\ell=1,\dots,L$, and the null hypothesis is rejected if $t_{0}\geq t_{(\ell^{*})}$, where $\ell^{*}=\alpha(L+1)$.
Notice that $\alpha$ and $L$ should be chosen so that $\ell^{*}$ is an integer. 
By a symmetry argument, the resulting test has significance $\alpha$ \cite{baddeley2014tests}.

\section{Detection of gravitational waves} \label{APP:s:gw_detection}
Using persistent homology to detect gravitational waves is particularly intriguing, as it differs significantly from applications involving synthetic signals with background white noise. 
In this section, we provide more details regarding the GW application described in the main document.

\subsection{The Kaggle Gravitational Waves Detection dataset}
For illustration purposes, we use the gravitational wave data from the Kaggle Gravitational Waves Detection competition \cite{kaggle-g2net}. 
The dataset consists of realistic simulations of gravitational wave measurements from a network of three gravitational wave interferometers (LIGO Hanford, LIGO Livingston, and VIRGO).
Each data sample, or \emph{event}, contains three time series, each 2 seconds long and sampled at $2048$ samples per second. 
The data is labeled as ``$0$'' if there is only noise and ``$1$'' if there is a noisy gravitational wave signal.
The detection task is challenging, particularly for data with very low SNR. 

\subsubsection{Data subsets}
In order to characterize in depth the performance of the signal detection tests for gravitational waves, we divided the original dataset in different groups according to their SNR.
Since an estimation of the SNR of each signal is not available, we used an adaptive hard threshold \cite{donoho1994ideal,mallat2008wavelet} on the spectrogram to segment the dataset first, following the notion that when the signature of a gravitational wave in the spectrogram is clear its values will surpass a higher threshold than when the signal is deeply buried in noise.
Given that the real and imaginary parts of the STFT of white Gaussian noise $\xi$ with variance $\sigma^2$ are normally distributed, and provided that the analysis window in \eqref{e:stft} is normalized to $\Vert g \Vert^{2}_{2}=1$, the spectrogram ordinates follow an exponential distribution $\text{Exp}(\lambda): t\mapsto\frac{1}{\lambda}e^{\frac{t}{\lambda}}$ with parameter $\lambda = \sigma^2$ \cite{pham2018novel, flandrin2018explorations}.

We define the normalized spectrogram maximum (NSM), $\hat{\zeta} = \hat{\gamma} / \hat{\sigma}$ , where $\hat{\gamma}=\max_{u,v}|V_g(f)(u,v)|$ is the maximum modulus of the STFT across the time-frequency plane, and $\hat{\sigma}$ is the standard deviation of the noise, estimated from the signal using a median absolute deviation estimator \cite{donoho1994ideal, mallat2008wavelet} \[ \hat{\sigma} = \frac{\text{median}\left(|\mathfrak{R}\{ V_{g}(f) \}|\right)}{0.6745}. \] 
The NSM serves as a practical proxy for the SNR; signals with higher NSM values are expected to have higher SNR values.
We then consider the set of GW signals as a union of the subsets $Z_i$ for $i \in \{2, 3, 4, 5, 6, 7\}$, where each $Z_i$ is defined to contain the signals with $\hat{\zeta} \ge i$. Note that $Z_i\subset Z_j$ if $i\ge j$.

We collected 50,000 signal samples for $Z_2$, and computed $\hat{\zeta}$ for each one, using the signal corresponding to LIGO Hanford. From this group, $|Z_3|=49,995$, $|Z_4|=14,542$, $|Z_5|=1,528$, $|Z_6|=670$, and $|Z_7|=255$.
Examples of spectrograms of the signals in the different subsets can be found in Sec. \ref{APP:s:gw_detection} of the SI document.
As can be noted in Figure \ref{f:subset_c}, most signals in $Z_7$ are visually identifiable in their spectrograms, while $Z_5$ and $Z_6$ contain numerous more subtle signals.
Note that our NSM thresholding is only used to create subsets of the data for comparison purposes; it plays no role in the detection procedure whatsoever.

To compute our ROC curves, for each value of $\hat{\zeta}$, we randomly selected an equal number of noise-only and signal-plus-noise samples in each subset of the data.
This makes sure we have class-balanced datasets for each value of $\hat{\zeta}$.

Finally, note that the signal components in GW signals typically span a narrow range of frequencies for more than half of the signal duration, causing them to persist for shorter periods in the persistence tree.
Hence, energy-based test statistics are expected to enhance detection performance, compared to distance-based tests.
Detection was carried out by applying each test to the three interferometers and claiming a detection when the null was rejected for at least one of the three.

\begin{figure}
    \centering
    \begin{subfigure}[b]{0.98\textwidth}
        \centering  
    \includegraphics{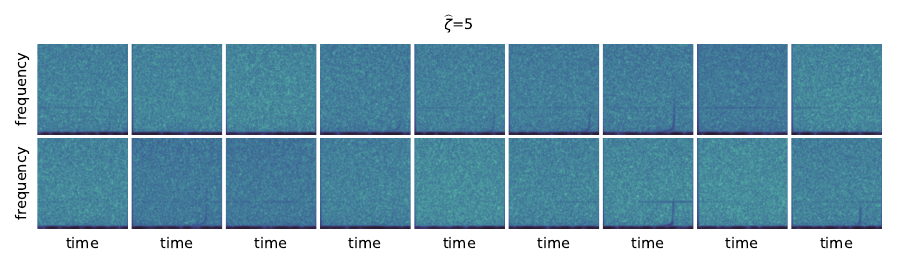} \\
    \includegraphics{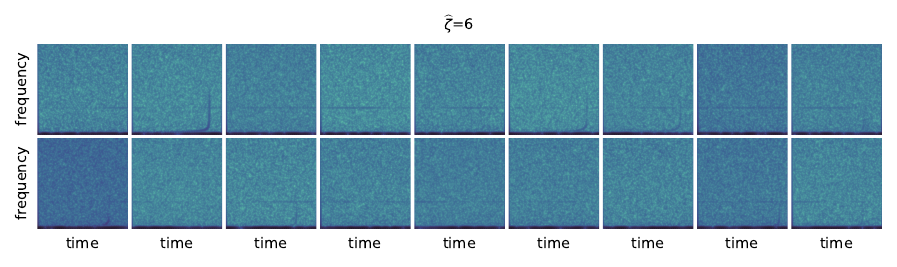} \\
    \includegraphics{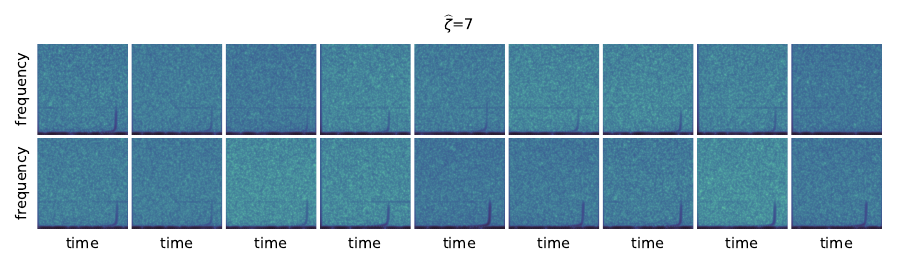}
    \caption{Random subsets of spectrograms from the GW dataset with thresholds $\hat{\zeta}\in\{5,6,7\}$. The signals correspond to the LIGO Hanford observatory.} \label{f:subset_c}  
    \end{subfigure}
    
    \interspace
    
    \begin{subfigure}[b]{0.31\textwidth}
        \hspace*{-1.5cm}
    \includegraphics{ROC_energy_bonferroni_rescaled.pdf}
    \caption{} \label{f:ROC_c}   
    \end{subfigure}
    \caption{Gravitational wave (GW) detection. \textbf{(a)} shows spectrograms of random subsets of signals coming from the dataset with thresholds $\hat{\zeta}\in{5,6,7}$. All the signals have a GW present, even though is not evident in most of the signals from the sets with $\hat{\zeta}=5$ and $\hat{\zeta}=6$ \textbf{(b)} Gravitational wave detection ROC curves. Test statistic: energy (Bonferroni).}
\end{figure}

\newpage
\bibliographystyle{unsrt}

\end{document}